\documentclass[11pt]{article}
\pdfoutput=1

\usepackage[a4paper,margin=1in]{geometry}
\usepackage{titling}
\usepackage{float}
\usepackage[bookmarks,bookmarksopen,bookmarksdepth=2]{hyperref}
\usepackage{xspace}
\usepackage{comment}
\usepackage{mathtools}
\usepackage{mathrsfs}
\usepackage{amsmath}
\usepackage{amsfonts}
\usepackage{amssymb}
\usepackage{amsthm}
\usepackage{stmaryrd}
\usepackage{pdfsync}
\usepackage{thmtools,thm-restate}
\usepackage[textsize=small]{todonotes}
\usepackage{paralist}
\usepackage{algorithm}
\usepackage[noend]{algpseudocode}
\usepackage{fancybox}
\usepackage{mathdots}
\usepackage[utf8]{inputenc}
\usepackage[numbers]{natbib}

\declaretheorem{proposition}
\declaretheorem[sibling=proposition]{lemma}
\declaretheorem[sibling=proposition]{theorem}
\declaretheorem[sibling=proposition]{corollary}
\declaretheorem[style=remark]{example}
\declaretheorem[style=remark]{claim}





\newcommand{\expspace}{{\sc ExpSpace}\xspace}

\newcommand{\tower}{{\sc Tower}\xspace}

\newcommand{\inc}[1]{\add{#1}{1}}
\newcommand{\dec}[1]{\sub{#1}{1}}
\newcommand{\coreadd}[2]{#1 \,\, +\!\!= \, #2}
\newcommand{\coresub}[2]{#1 \,\, -\!\!= \, #2}
\newcommand{\add}[2]{$\coreadd{#1}{#2}$}
\newcommand{\sub}[2]{$\coresub{#1}{#2}$}

\newcommand{\para}[1]{\paragraph{#1.}}
\makeatletter
\def\BState{\State\hskip-\ALG@thistlm}
\makeatother

\newcommand{\Loopatmost}[1]{\Loop\ \textbf{at most $#1$ times}}

\newcommand{\comp}[2]{#1 \rhd #2}

\algrenewcommand{\algorithmiccomment}[1]{\qquad$\rightarrow$ #1}
\newcommand{\goto}[2]{\textbf{goto} {\footnotesize #1} \textbf{or} {\footnotesize #2}}
\newcommand{\gotod}[1]{\textbf{goto} {\footnotesize #1}}
\newcommand{\testz}[1]{\textbf{zero?}~$#1$}
\newcommand{\testm}[1]{\textbf{max?}~$#1$}
\newcommand{\halt}{\textbf{halt}}
\newcommand{\haltz}[1]{{\halt} \textbf{if} $#1 = 0$}
\newcommand{\vr}[1]{\mathsf{#1}}
\newcommand{\tuple}[1]{\langle #1 \rangle}

\newcommand{\setup}[1]{\textbf{setup} $#1$}
\newcommand{\ismax}[1]{\textbf{ismax} $#1$}
\newcommand{\iszero}[1]{\textbf{iszero} $#1$}
\newcommand{\reset}[1]{\textbf{reset} $#1$}

\title{The Reachability Problem for Petri Nets is Not Elementary%
\thanks{This research has been supported by
the ERC project `Lipa' within the EU Horizon 2020 research and innovation programme (No.~683080),
NCN grants 'Separation problems in automata theory' (2016/21/D/ST6/01376) and `Automatic analysis of concurrent systems' (2017/27/B/ST6/02093),
ANR programmes IdEx Bordeaux (ANR-10-IDEX-03-02) and BraVAS (ANR-17-CE40-0028), and 
the Leverhulme Trust Research Fellowship `Petri Net Reachability Conjecture' (RF-2017-579).}}

\author{%
Wojciech Czerwi\'nski \\ University of Warsaw           \\ \texttt{wczerwin@mimuw.edu.pl} \and
S{\l}awomir Lasota    \\ University of Warsaw           \\ \texttt{sl@mimuw.edu.pl} \and
Ranko Lazi\'c         \\ University of Warwick          \\ \texttt{R.S.Lazic@warwick.ac.uk} \and
J\'er\^ome Leroux     \\ CNRS \& University of Bordeaux \\ \texttt{jerome.leroux@labri.fr} \and
Filip Mazowiecki      \\ University of Bordeaux         \\ \texttt{filip.mazowiecki@u-bordeaux.fr}}

\date{}

\begin{document}

\pagenumbering{roman}
\begin{titlingpage}

\maketitle

\begin{abstract}
Petri nets, also known as vector addition systems, are a long established model of concurrency with extensive applications in modelling and analysis of hardware, software and database systems, as well as chemical, biological and business processes.  The central algorithmic problem for Petri nets is reachability: whether from the given initial configuration there exists a sequence of valid execution steps that reaches the given final configuration.
The complexity of the problem has remained unsettled since the 1960s, and it is one of the most prominent open questions in the theory of verification.  Decidability was proved by Mayr in his seminal STOC 1981 work, and the currently best published upper bound is non-primitive recursive Ackermannian of Leroux and Schmitz from LICS 2019.
We establish a non-elementary lower bound, i.e.\ that the reachability problem needs a tower of exponentials of time and space.  Until this work, the best lower bound has been exponential space, due to Lipton in 1976.  The new lower bound is a major breakthrough for several reasons.  Firstly, it shows that the reachability problem is much harder than the coverability (i.e., state reachability) problem, which is also ubiquitous but has been known to be complete for exponential space since the late 1970s.  Secondly, it implies that a plethora of problems from formal languages, logic, concurrent systems, process calculi and other areas, that are known to admit reductions from the Petri nets reachability problem, are also not elementary.  Thirdly, it makes obsolete the currently best lower bounds for the reachability problems for two key extensions of Petri nets: with branching and with a pushdown stack.

At the heart of our proof is a novel gadget so called the factorial
amplifier that, assuming availability of counters that are zero
testable and bounded by~$k$, guarantees to produce arbitrarily large
pairs of values whose ratio is exactly the factorial of~$k$.  We also
develop a novel construction that uses arbitrarily large pairs of
values with ratio $R$ to provide zero testable counters that are
bounded by~$R$.  Repeatedly composing the factorial amplifier with itself by means of the construction then enables us to compute in linear time Petri nets that simulate Minsky machines whose counters are bounded by a tower of exponentials, which yields the non-elementary lower bound.  By refining this scheme further, we in fact establish hardness for $h$-exponential space already for Petri nets with $h + 13$ counters.

\end{abstract}

\end{titlingpage}
\pagenumbering{arabic}

\section{Introduction}

Petri nets~\cite{Petri62}, also known as vector addition systems \cite{KarpM69}, \cite[cf.\ Section~5.1]{Greibach78a}, \cite{HopcroftP79}, are a long established model of concurrency with extensive applications in modelling and analysis of hardware \cite{BurnsKY00,LerouxAG15}, software \cite{GermanS92,BouajjaniE13,KKW14} and database \cite{BojanczykDMSS11,BojanczykMSS09} systems, as well as chemical~\cite{AngeliLS11}, biological \cite{PelegRA05,BaldanCMS10} and business \cite{Aalst15,LiDV17} processes (the references on applications are illustrative).  The central algorithmic problem for Petri nets is reachability: whether from the given initial configuration there exists a sequence of valid execution steps that reaches the given final configuration.

There are several presentations of Petri nets, and a number of variants of their reachability problem, all of which are equivalent.  One simple way to state the problem is: given a finite set~$T$ of integer vectors in $d$-dimensional space and two $d$-dimensional vectors $\mathbf{v}$ and $\mathbf{w}$ of nonnegative integers, does there exist a walk from $\mathbf{v}$ to $\mathbf{w}$ such that it stays within the nonnegative orthant, and its every step modifies the current position by adding some vector from~$T$?

\para{Brief History of the Problem}

Over the past half century, the complexity of the Petri nets
reachability problem has remained unsettled.  The late 1970s and the
early 1980s saw the initial burst of activity.  After an incomplete
proof by Sacerdote and Tenney~\cite{SacerdoteT77}, decidability of the
problem was established by Mayr~\cite{Mayr81,Mayr84}, whose proof was
then simplified by Kosaraju~\cite{Kosaraju82}.  Building on the
further refinements made by Lambert in the 1990s~\cite{Lambert92},
there has been substantial progress over the past ten years
\cite{Leroux10,Leroux11,Leroux12}, culminating in the first upper
bound on the complexity~\cite{LerouxS15}, recently improved to
Ackermannian~\cite{Schmitz18cigar}. 

In contrast to the progress on refining the proof of decidability and obtaining an upper bound on the complexity, Lipton's landmark result that the Petri nets reachability problem requires exponential space~\cite{lipton76} has remained the state of the art on lower bounds for over 40 years.  Moreover, in conjunction with an apparent tightness of Lipton's construction, this has led to the conjecture that the problem is \expspace-complete becoming common in the community.\footnote{For an interesting post by Lipton about his exponential space hardness result, we refer the reader to \url{https://rjlipton.wordpress.com/2009/04/08/an-expspace-lower-bound/}.}

\para{Main Result and Its Significance}

We show that the Petri nets reachability problem is not elementary, more precisely that it is hard for the class {\tower} of all decision problems that are solvable in time or space bounded by a tower of exponentials whose height is an elementary function of the input size \cite[Section~2.3]{Schmitz16toct}.  We see this result as important for several reasons:
\begin{itemize}
\item
It refutes the conjecture of \expspace-completeness, establishing that the reachability problem is much harder than the coverability (i.e., state reachability) problem; the latter is also ubiquitous but has been known to be \expspace-complete since the late 1970s~\cite{lipton76,Rackoff78}.
\item
It narrows significantly the gap to the best known upper bound
~\cite{Schmitz18cigar} in terms of the Ackermannian function, which is among the slowest-growing functions that dominate all primitive recursive functions.
\item
It implies that a plethora of problems from formal languages \cite{Crespi-ReghizziM77}, logic \cite{Kanovich95,DemriFP16,DeckerHLT14,ColcombetM14}, concurrent systems~\cite{GantyM12,EsparzaGLM17}, process calculi~\cite{Meyer09}, linear algebra~\cite{HL18} and other areas (the references are again illustrative), that are known to admit reductions from the Petri nets reachability problem, are also not elementary; for more such problems and a wider discussion, we refer to Schmitz's recent survey~\cite{Schmitz16siglog}.
\item
It makes obsolete the {\tower} lower bounds for the reachability problems for two key extensions of Petri nets: branching vector addition systems~\cite{LazicS15} and pushdown vector addition systems~\cite{LazicT17}.
\end{itemize}

\para{Petri Nets and Exponential Space Hardness}

Before we present the main ideas involved in the proof of the non-elementary lower bound for the reachability problem, let us introduce some key aspects of Petri nets by recalling the crux of Lipton's construction for the exponential space hardness.

Minsky machines, which can be thought of as deterministic finite-state machines equipped with several registers, are one of the classical universal models of computation \cite[Chapter~14]{minsky1967computation}.  The registers, which are called counters, store natural numbers (initially~$0$) and can be manipulated by only two simple operations: increments (\inc{\vr{x}}), and conditionals that either jump if a counter is zero or decrement it otherwise (\textbf{if} $\vr{x} = 0$ \textbf{then goto} {\footnotesize $L$} \textbf{else} \dec{\vr{x}}).  With appropriate restrictions, the halting problem for Minsky machines is complete for various time and space complexity classes.  Lipton's proof proceeds by reducing from the following \expspace-complete problem (cf.\ \cite[Theorems 3.1 and 4.3]{FischerMR68}): given a Minsky machine of size~$n$ with $3$ counters, does it halt after a run in which the counters remain bounded by $2^{2^n}$?

Petri nets can be construed as similar to Minsky machines, but with two important differences.  Firstly, Petri nets can increment a counter always, and can decrement a counter if positive, but cannot test whether a counter is zero.  Secondly, Petri nets are nondeterministic.  Thus, a decrement of a counter either succeeds and the run continues (if the counter was positive), or fails and the current nondeterministic branch is blocked (if the counter was zero).  It is the lack of zero tests that makes decidable~\cite{Mayr84} the reachability problem: given a Petri net and a subset of its counters, does it halt in a configuration where all the counters from the subset are zero?

To construct a Petri net that simulates the given Minsky machine of size~$n$ as long as its $3$~counters are bounded by~$2^{2^n}$, the main task is therefore checking that such a counter~$\vr{x}$ is zero.  Lipton observed that it suffices to introduce a counter~$\hat{\vr{x}}$, set up and maintain the invariant $\vr{x} + \hat{\vr{x}} = 2^{2^n}$, and implement a macro \textbf{Dec}$_n$~$\hat{\vr{x}}$ that decrements $\hat{\vr{x}}$ and increments $\vr{x}$ (i.e., performs the code \dec{\hat{\vr{x}}}\hspace{1em}\inc{\vr{x}}) exactly $2^{2^n}$ times.  That is because the code \textbf{Dec}$_n$~$\hat{\vr{x}}$\hspace{1em}\textbf{Dec}$_n$~$\vr{x}$ (where, in the latter instance of the macro, $\vr{x}$ and $\hat{\vr{x}}$ are swapped) then checks that counter~$\vr{x}$ is zero: it either succeeds and leaves $\vr{x}$ and $\hat{\vr{x}}$ unchanged if $\vr{x}$ was zero (i.e.\ $\hat{\vr{x}}$ was~$2^{2^n}$), or fails otherwise.

Lipton's construction meets that goal inductively, by setting up pairs of counters such that $\vr{x}_i + \hat{\vr{x}}_i = 2^{2^i} = \vr{y}_i + \hat{\vr{y}}_i$ and implementing a macro \textbf{Dec}$_i$ that decrements a counter and increments its complement exactly $2^{2^i}$ times, for $i = 0, 1, \ldots, n$.  Doing it for $i = 0$ is easy.  To step from $i$ to $i + 1$, consider the following code, where the loops are repeated nondeterministic numbers of times:
\begin{quote}
\begin{algorithmic}
\Loop
  \State \inc{\vr{x}_i} \quad \dec{\hat{\vr{x}}_i}
  \Loop
    \State \inc{\vr{y}_i} \quad \dec{\hat{\vr{y}}_i}
    \State \dec{\hat{\vr{x}}_{i + 1}} \quad \inc{\vr{x}_{i + 1}}
  \EndLoop
  \State \textbf{Dec}$_i$~$\vr{y}_i$
\EndLoop
\State \textbf{Dec}$_i$~$\vr{x}_i$.
\end{algorithmic}
\end{quote}
Assuming that $\vr{x}_i$ and $\vr{y}_i$ are zero at the start, there is a unique nondeterministic branch that runs the code completely (without getting blocked): it repeats the outer loop $2^{2^i}$ times with the final \textbf{Dec}$_i$~$\vr{x}_i$ both checking that $\vr{x}_i$ equals $2^{2^i}$ (i.e.\ $\hat{\vr{x}}_i$ equals zero) and resetting it to zero, and in each iteration similarly the inner loop is repeated also $2^{2^i}$ times.  Hence, by squaring $2^{2^i}$, the code decrements the counter~$\hat{\vr{x}}_{i + 1}$ and increments the counter~$\vr{x}_{i + 1}$ exactly $2^{2^{i + 1}}$ times, as required for an implementation of \textbf{Dec}$_{i + 1}$~$\hat{\vr{x}}_{i + 1}$.

\newif\iffilip
\filiptrue

\iffilip

\para{Obtaining the {\tower} Lower Bound}

To prove that the Petri nets reachability problem requires a tower of exponentials of time and space, we have to tackle two major obstacles:
\begin{enumerate}
\item
The fact that Lipton's construction applies also to the coverability problem, which has an {\expspace} upper bound~\cite{Rackoff78}, means that a construction that achieves a lower bound beyond {\expspace} cannot follow the same pattern.  For example, we cannot hope to implement a macro whose unique complete execution performs some given counter operations exactly a triply exponential number of times.
\item
It has been known for many years how Petri nets can compute various
functions weakly, in the sense that the result may be
nondeterministically either correct or
smaller~\cite{MayrM81,LerouxS14}\footnote{In their article, Mayr and
  Meyer establish that the containment problem between finite sets of
  reachable configurations of two given Petri nets is `the first
  uncontrived decidable problem which is not primitive recursive'.}.
Most notably, for all natural numbers~$n$, Grzegorczyk's
function~\cite{lob1970hierarchies} $F_n$ is computable weakly by a Petri net of size~$O(n)$.  However, even supposing that we have means of simulating zero tests of several counters bounded by some~$k$, it has been unknown how to compute exactly a value exponential in~$k$ without using $\Omega(k)$ extra counters.
\end{enumerate}

To overcome the {\expspace} barrier, we devise a novel construction for simulating zero tests of counters bounded by some~$R$: instead of relying on an ability to repeat some counter operations exactly $R$ times, it assumes that a pair of counters have been set to sufficiently large values whose ratio is exactly $R$, and it ensures that the simulations of zero tests are correct by testing that one of the two auxiliary counters is zero in the final configuration of the reachability problem instance.

In overcoming the second obstacle, surprisingly a central role is
played by the simple identity $\prod_{i = 1}^{k-1} (i+1)/i = k$.  We
devise a gadget, so called the factorial amplifier, that sets two
counters $\vr{c}$ and $\vr{d}$ to arbitrarily large values such that
$\vr{d} = \vr{c} \cdot k!$ as follows.  After initialising both
counters to a same value, the main loop uses some extra machinery and
a constant number of auxiliary counters to attempt to multiply
$\vr{c}$ and $\vr{d}$ by each of the fractions $1/i$ and $(i+1)/i$
(respectively) for $i = 1, \ldots, k-1$.  Since the multiplications
are implemented by repeated additions and subtractions, and since the
factorial amplifier cannot zero-test counters that are not bounded by~$k$, we have that the resulting values of $\vr{c}$ and $\vr{d}$ are not necessarily correct.  Nevertheless, the construction (and here an appropriate intertwining of the operations on $\vr{c}$ and $\vr{d}$ in the main loop is key) is such that the
computation is correct if and only if
the final value of $\vr{d}$ is at least (and thus exactly) $k$ times the initialised one.  Then $\vr{c}$ has necessarily been divided by $(k-1)!$, yielding the ratio $k!$ between counters $\vr{c}$ and $\vr{d}$ as required.

\else

\para{Overview of the {\tower} Lower Bound}

We reduce from the halting problem for Minsky machines of size~$n$ where the bound on the counters is a tower of exponentials.  Instead of exponentials of~$2$, we iterate the factorial operation from~$3$, i.e.\ the bound is $3!^n = 3\overbrace{! \cdots !}^n$.

How can we simulate zero tests of a counter~$\vr{x}$ bounded by $B = 3!^n$ on Petri nets?  There is a substantial obstacle: we cannot hope to implement a macro as before that increments a counter and decrements another one exactly $B$~times, because succeeding in doing so would enable us to reduce also to the coverability problem (as is the case with Lipton's construction), which would contradict the fact that the latter problem has an {\expspace} upper bound~\cite{Rackoff78}.  To overcome the obstacle, we employ the following novel idea.  Assume we have two auxiliary counters $\vr{c}$ and $\vr{d}$ that have been initialised to sufficently large values and such that $\vr{d} = \vr{c} \cdot B$.  We also use an auxiliary counter~$\hat{\vr{x}}$ such that $\vr{x} + \hat{\vr{x}} \leq B$, and employ the following code to check that $\vr{x}$ is zero:
\begin{quote}
\begin{algorithmic}
\Loop
  \State \inc{\vr{x}} \quad \dec{\hat{\vr{x}}}
  \State \dec{\vr{d}}
\EndLoop
\State \dec{\vr{c}}
\Loop
  \State \dec{\vr{x}} \quad \inc{\hat{\vr{x}}}
  \State \dec{\vr{d}}
\EndLoop
\State \dec{\vr{c}}.
\end{algorithmic}
\end{quote}
We then require that $\vr{d}$ is zero when the Petri net halts: the only way for the counter~$\vr{d}$ to reach zero at the end of the run is to maintain exactly the ratio $B$ between $\vr{c}$ and~$\vr{d}$.  Consequently, each execution of the code has to repeat both loops $B$~times, which checks that $\vr{x} = 0$ and $\hat{\vr{x}} = B$, and restores their values.  Requiring that $\vr{d}$ is zero at the end of the run is possible since we are reducing to the reachability (rather than the coverability) problem.

We are thus left with a new question: how can we set up a pair of Petri net counters that are arbitrarily large and have ratio $3!^n$ exactly?  We remark that it has been known for many years how to multiply a Petri net counter by a big (up to Ackermannian size) constant weakly~\cite{MayrM81,LerouxS14}\footnote{In their article, Mayr and Meyer establish that the containment problem between finite sets of reachable configurations of two given Petri nets is `the first uncontrived decidable problem which is not primitive recursive'.}, in the sense that the result may be nondeterministically either correct or smaller, but not how to do it exactly.  Our answer is inductive, or in other words we suppose that the former question with one fewer factorial has been answered.  More precisely, assuming that we are able to simulate zero tests of counters bounded by~$k$, we show how to set up a pair of Petri nets counters that are arbitrarily large and have ratio~$k!$.

To that end, it turns out that the simple identity $\prod_{i = 1}^{k-1} (i+1)/i = k$ is useful.  We initialise a counter~$\vr{d}$ to an arbitrarily large value, and then attempt to multiply it by each of the fractions $(i+1)/i$ in sequence using another counter~$\vr{d}'$ and the following code:
\begin{quote}
\begin{algorithmic}
\Loop
  \State \sub{\vr{d}}{\vr{i}} \quad \add{\vr{d}'}{\vr{i}+1}
\EndLoop
\Loop
  \State \dec{\vr{d}'} \quad \inc{\vr{d}}.
\EndLoop
\end{algorithmic}
\end{quote}
Here $\vr{i}$ is a counter bounded by~$k$, and the two operations
involving it are macros that contain another such counter~$\vr{i}'$,
loops and zero tests of $\vr{i}$ and $\vr{i}'$ (which we are assuming
to have at our disposal).  A key insight is that, even though the
multiplications may a priori not be exact (the counters $\vr{d}$ and
$\vr{d}'$ are not bounded by~$k$, and so we have no means of zero
testing them), their accuracy can be checked at the end by verifying
that the final value of $\vr{d}$ is at least (and hence exactly) its
initialised value times~$k$ (the right-hand side of the identity).  We
then complete the factorial amplifier by a novel construction that augments the backbone of the sequence of multiplications of $\vr{d}$ by code that makes sure that another counter~$\vr{c}$ is initialised to the same value as $\vr{d}$ but is then multiplied by $1/i$ for each $i = 1, \ldots, k-1$, i.e.\ divided by $(k-1)!$.  This ensures that the original value was divisible by $(k-1)!$ and that finally $\vr{d} = \vr{c} \cdot k!$ as required.

\fi

\para{Organisation of the Paper}

After the preliminaries in Section~\ref{s:c.p}, our scheme for
simulating zero tests of bounded counters is developed in
Section~\ref{s:elim.zero}, and the factorial amplifier for setting up arbitrarily large pairs of Petri net counters with ratio $k!$ is programmed in Section~\ref{s:exp}.

In Section~\ref{s:main}, we put the pieces together to obtain the main result, and then also show how the construction can be refined to establish that, for each positive integer~$h$, we have $h$-\expspace-hardness (tower of exponentials of height~$h$) of the reachability problem already for Petri nets with $h + 13$ counters.

The last refinement (to $h+13$ counters) is mostly relegated to the appendix available online.

\section{Counter Programs}
\label{s:c.p}

Proving the main result of this paper, namely that solving the Petri nets reachability problem requires a tower of exponentials of time and space, involves some intricate programming.  For ease of presentation, instead of working directly with Petri nets or vector addition systems, our primary language will be imperative programs that operate on variables which are called counters, and that range over the naturals (i.e.\ the nonnegative integers).

To streamline the main constructions and proofs, it will be useful to allow the programs to have two types of counters:
\begin{description}
\item[tested counters]
are bounded by a fixed positive integer~$B$ and may be tested for equality with the end points of their range, i.e.\ $0$ and~$B$;
\item[untested counters]
are unbounded and the testing commands may not be applied to them.
\end{description}
We remark that the availability of the testing commands will not make counter programs more expressive than Petri nets, because the finiteness of the range of tested counters means that their values can be seen as components of net places (or of states in vector addition systems).  However, such an enumerative translation involves a blow up proportional to the bound~$B$.

Concretely, a counter program is a sequence of commands, each of which is of one of the following five kinds:
\begin{quote}
\begin{tabular}{l@{\qquad}l}
\inc{\vr{x}}                       & (increment counter~$\vr{x}$) \\
\dec{\vr{x}}                       & (decrement counter~$\vr{x}$) \\
\goto{$L$}{$L'$}                   & (jump to either line~$L$ or line~$L'$) \\
\testz{\vr{x}}                     & (continue if counter~$\vr{x}$ equals $0$), \\
\testm{\vr{x}}                     & (continue if counter~$\vr{x}$ equals $B$),
\end{tabular}
\end{quote}
except that the last command is of the form:
\begin{quote}
\begin{tabular}{l@{\qquad}l}
\haltz{\vr{x_1}, \ldots, \vr{x}_l} & (terminate provided all \\
& the listed counters are zero).
\end{tabular}
\end{quote}
Note that the two types of counters are not declared explicitly: without loss of generality, a counter~$\vr{x}$ is regarded as tested (and thus has the range $\{0, \ldots, B\}$) if and only if it occurs in a \testz{\vr{x}} or \testm{\vr{x}} command in the program.

We use a shorthand {\halt} when no counter is required to be zero at termination.

To illustrate how the available commands can be used to express further constructs, addition \add{\vr{x}}{m} and substraction \sub{\vr{x}}{m} of a natural constant $m$ can be written as $m$ consecutive increments \inc{\vr{x}} and decrements \dec{\vr{x}} (respectively).  As another illustration, conditional jumps \textbf{if} $\vr{x} = 0$ \textbf{then goto} {\footnotesize $L$} \textbf{else} \dec{\vr{x}} which feature in common definitions of Minsky machines can be written as:
\begin{quote}
\begin{algorithmic}[1]
\State \goto{\ref{l:goto.z}}{\ref{l:goto.nz}}
\State \testz{\vr{x}} \label{l:goto.z}
\State \gotod{$L$}
\State \dec{\vr{x}},                    \label{l:goto.nz}
\end{algorithmic}
\end{quote}
where \gotod{$L$} is a shorthand for the deterministic jump \goto{$L$}{$L$}.

We emphasise that counters (both tested and untested) are not permitted to have negative values.  In the example we have just seen, that is why the decrement in line~\ref{l:goto.nz} works also as a non-zero test.

Two more remarks may be useful.  Firstly, our notion of counter programs only serves as a convenient medium for presenting both Petri nets and Minsky machines with bounded counters, and the exact syntax is not important; we were inspired here by Esparza's presentation~\cite[Section~7]{Esparza98} of Lipton's lower bound~\cite{lipton76}.  Secondly, although the \haltz{\vr{x_1}, \ldots, \vr{x}_l} commands could be expressed by zero tests followed by just {\halt}, having them as atomic commands makes it possible to require untested counters to be zero at termination.  The latter feature makes untested counters correspond to Petri net counters, which are unbounded, and can be zero tested only at the start and finish of runs by specifying initial and final configurations in instances of the reachability problem.


\subsection{Runs and Computed Relations}

A \emph{$B$-run} of a program from an initial valuation of all its counters is a run in which all values of all counters are at least~$0$, all values of all tested counters are at most~$B$, and the max tests are interpreted as checks for equality with~$B$.

We say that such a run is \emph{halted} if and only if it has successfully executed its {\halt} command (which is necessarily the program's last); otherwise, the run is either \emph{partial} or \emph{infinite}.  Observe that, due to a decrement that would cause a counter to become negative, or due to an increment that would exceed the bound of a tested counter, or due to an unsuccesful zero or max test, or due to an unsuccessful terminal check for zero, a partial run may be maximal because it is blocked from further execution.  Moreover, due to nondeterministic jumps, the same program from the same initial valuation may have various $B$-runs in each of the three categories: halted runs, maximal partial runs, and infinite runs.  We are mostly going to be interested in final counter valuations that are reached by halted runs.

We regard a run as \emph{complete} if and only if it is halted and its initial valuation assigns zero to every counter. Let $\vr{x}_1, \ldots, \vr{x}_l$ be some (not necessarily all) of the counters in the program. We say that the \emph{relation $B$-computed in $\vr{x}_1, \ldots, \vr{x}_l$} by a program is the set of all tuples $\tuple{v_1, \ldots, v_l}$ such that the program has a complete $B$-run whose final valuation assigns to every counter~$\vr{x}_i$ the natural number~$v_i$.

We may consider the same program with more than one bound for its tested counters.  When the bound~$B$ is clear, or when it is not important because there are no tested counters, we may write simply `run' and `computed' instead of `$B$-run' and `$B$-computed' (respectively).

\subsection{Examples}

\begin{example}
Consider the following program, where $C$ is a natural constant, and we observe that all the counters are untested:
\begin{quote}
\begin{algorithmic}[1]
\State \add{\vr{x}'}{C}
\State \goto{\ref{l:testz.exit}}{\ref{l:testz.iter}} \label{l:testz.rep}
\State \inc{\vr{x}} \quad \dec{\vr{x}'}              \label{l:testz.iter}
\State \add{\vr{y}}{2}                               \label{l:testz.end}
\State \gotod{\ref{l:testz.rep}}
\State \haltz{\vr{x}'}.                              \label{l:testz.exit}
\end{algorithmic}
\end{quote}

It repeats the block of three commands in lines \ref{l:testz.iter}--\ref{l:testz.end} some number of times chosen nondeterministically (possibly zero, possibly infinite) and then halts provided counter~$\vr{x}'$ is zero.  Replacing the two jumps by more readable syntactic sugar, we may write this code as:
\begin{quote}
\begin{algorithmic}[1]
\State \add{\vr{x}'}{C}
\Loop
  \State \inc{\vr{x}} \quad \dec{\vr{x}'}
  \State \add{\vr{y}}{2}
\EndLoop
\State \haltz{\vr{x}'}.
\end{algorithmic}
\end{quote}

It is easy to see that there is a unique complete run (and there are no infinite runs), in which the loop is iterated exactly $C$ times.  Thus, the relation computed in $\vr{x}, \vr{y}$ is the set with the single tuple $\tuple{C,2C}$.
\qed
\end{example}

\begin{example} \label{ex:arguments}
We shall need to reason about properties of counter valuations at certain points in programs.  As an example which will be useful later for simulating tested counters by untested ones, consider a fixed positive integer~$B$ and assume that
\begin{equation}
\label{eq:B}
\vr{x} + \hat{\vr{x}} \leq B \text{ and } \vr{d} \geq \vr{c} \cdot B
\end{equation}
holds in a run at the entry to (i.e., just before executing) the program fragment
\begin{quote}
\begin{algorithmic}
\Loop
  \State \inc{\vr{x}} \quad \dec{\hat{\vr{x}}}
  \State \dec{\vr{d}}
\EndLoop
\State \dec{\vr{c}}.
\end{algorithmic}
\end{quote}

The number of times the loop has been iterated by a run that also exits (i.e., completes executing) the program fragment is nondeterministic, so let us denote it by~$K$.  It is easy to see that property~\eqref{eq:B} necessarily also holds at the exit, since:
\begin{itemize}
\item
the sum $\vr{x} + \hat{\vr{x}}$ is maintained by each iteration of the loop,
\item
we have that $K \leq B$, and
\item
counters $\vr{d}$ and $\vr{c}$ have been decreased by $K$ and $1$ (respectively).
\end{itemize}

Continuing the example, if we additionally assume that the exit counter valuation satisfies $\vr{d} = \vr{c} \cdot B$, then we deduce that:
\begin{itemize}
\item
necessarily $K = B$,
\item
$\vr{d} = \vr{c} \cdot B$ also held at the entry, and
\item
$\vr{x} = 0$ and $\hat{\vr{x}} = B$ at the entry, and their values at the exit are swapped.
\end{itemize}

We have thus seen two small arguments, one based on propagating properties of counter valuations forwards through executions of program fragments, and the other backwards.  Both kinds will feature in the sequel.
\qed
\end{example}

\subsection{Petri Nets Reachability Problem}

It is well known that Petri nets \cite{Petri62}, vector addition systems \cite{KarpM69}, and vector addition systems with states \cite[cf.\ Section~5.1]{Greibach78a}, \cite{HopcroftP79} are alternative presentations of the same model of concurrent processes, in the sense that between each pair there exist straightforward translations that run in polynomial time and preserve the reachability problem; for further details, see e.g.\ the recent survey \cite[Section~2.1]{Schmitz16siglog}.

Since counter programs without tested counters can be seen as presentations of vector addition systems with states, where the latter are required to start with all vector components zero and to finish with vector components zero as specified by the {\halt} command, the Petri nets reachability problem can be stated as:
\begin{description}
\item[Input] A counter program without tested counters.
\item[Question] Does it have a complete run?
\end{description}

We remark that restricting further to programs where no counter is required to be zero finally (i.e., where the last command is just {\halt}) turns this problem into the Petri nets \emph{coverability} problem.  In the terminology of vector addition systems with states, the latter problem is concerned with reachability of just a state, with no requirement on the final vector components.  Lipton's {\expspace} lower bound~\cite{lipton76} holds already for the coverability problem, which is in fact \expspace-complete~\cite{Rackoff78}.

\subsection{A \tower-Complete Problem}

Let us write $!^n$ for the $n^{\text{th}}$ iterate of factorial, so that $a!^n = a\overbrace{! \cdots !}^n$.

To prove that the Petri nets reachability problem is not elementary, we shall provide a linear-time reduction from the following canonical problem.  It is complete for the class {\tower} of all decision problems that are solvable in time or space bounded by a tower of exponentials whose height is an elementary function of the input size \cite[Section~2.3]{Schmitz16toct}, with respect to elementary reductions.

\begin{description}
\item[Input] A counter program of size~$n$, without untested counters.
\item[Question] Does it have a complete $3!^n$-run?
\end{description}

For confirming that this problem is \tower-complete, we refer to \cite[Section~4.1]{Schmitz16toct} and \cite[Section~4.2]{Schmitz16toct} for the robustness of the class with respect to the choices of the fast-growing function hierarchy (here based on the factorial operation) and of the computational model (here nondeterministic Minsky machines), respectively.

\section{Simulating Tests}
\label{s:elim.zero}

We now introduce our central notion of amplifier for a ratio, and define a special operator for composing them with programs.  Provided the ratio of the amplifier is the same as the bound of the program's tested counters, the resulting composition will be an equivalent program in which those counters have become untested.  That is accomplished through eliminating the original program's zero and max tests by simulating them and using the amplifier to check that the simulations are correct, where a price to pay is introducing an extra untested counter for each of the original tested ones.

The amplifiers themselves may have tested counters.  An amplifier
whose tested counters are bounded by $B$ and whose ratio is a larger
number~$R$, called a $B$-amplifier by $R$, can then be seen, in conjunction with the composition operator, as a means for transforming programs whose tested counters are bounded by $R$ into equivalent programs whose tested counters are bounded by~$B$.  In the special case when the amplifier has no tested counters, the same will be true of the resulting programs.

Another feature, which will be key in Section~\ref{s:main}, is that
more powerful amplifiers will be obtainable by composition: applying
the operator to a $B$-amplifier by $B'$, and $B'$-amplifier by $B''$,
will produce a $B$-amplifier by~$B''$.

\subsection{Construction}

Suppose that:
\begin{itemize}

\item
$B$ and $R$ are positive integers;

\item
$\mathcal{A}$ is a \emph{$B$-amplifier by~$R$}, i.e.\ a program such that the relation it $B$-computes in counters $\vr{b}, \vr{c}, \vr{d}$ is
\[\{\tuple{b, c, d} \;:\; b = R,\ c > 0,\ d = c \cdot b\};\]

\item
$\mathcal{P}$ is a program.
\end{itemize}

\begin{example} \label{ex:ratio}
As an example to be used later, when $R$ is sufficiently small to
write $R$ consecutive increments explicitly, it is very easy to code
an amplifier by~$R$:
\begin{quote}
\begin{algorithmic}[1]
\State \add{\vr{b}}{R}
  \Comment{set $\vr{b}$ to constant $R$}
  \State \inc{\vr{c}} \quad \add{\vr{d}}{R}
\Loop
  \State \inc{\vr{c}} \quad \add{\vr{d}}{R}
\EndLoop
\State \halt.
\end{algorithmic}
\end{quote}
Observe that this amplifier does not have any tested counters and so,
for every positive integer~$B$, it is a $B$-amplifier by~$R$.
\qed
\end{example}

Under the stated assumptions, we now define a construction of a program $\comp{\mathcal{A}}{\mathcal{P}}$ which $B$-computes any relation that is $R$-computed by~$\mathcal{P}$.  The idea is to turn each tested counter~$\vr{x}$ of $\mathcal{P}$ into an untested one through supplementing it by a new counter~$\hat{\vr{x}}$ and ensuring that the invariant $\vr{x} + \hat{\vr{x}} = R$ is maintained, so that zero tests of $\vr{x}$ can be replaced by loops that $R$ times increment $\vr{x}$ and then $R$ times decrement~$\vr{x}$, and similarly for max tests.  Counter~$\vr{b}$ provided by $\mathcal{A}$ is employed to initialise each complement counter~$\hat{\vr{x}}$, whereas $\vr{c}$ and $\vr{d}$ are used to ensure that if $\vr{d}$ is zero at the end of the run then all the loops in the simulations of the zero and max tests iterated $R$ times as required.  Concretely, the program $\comp{\mathcal{A}}{\mathcal{P}}$ is constructed as follows:
\begin{enumerate}[(i)]

\item\label{e:comp.1}
counters are renamed if necessary so that no counter occurs in both $\mathcal{A}$ and~$\mathcal{P}$;

\item\label{e:comp.2}
letting $\vr{x}_1$, \ldots, $\vr{x}_l$ be the tested counters of $\mathcal{P}$, new counters $\hat{\vr{x}}_1$, \ldots, $\hat{\vr{x}}_l$ are introduced and the following code is inserted at the beginning of~$\mathcal{P}$:
\begin{quote}
\begin{algorithmic}
\Loop
  \State \inc{\hat{\vr{x}}_1} \quad $\cdots$ \quad \inc{\hat{\vr{x}}_l}
  \State \dec{\vr{b}} \quad \dec{\vr{d}}
\EndLoop
\State \dec{\vr{c}}
\end{algorithmic}
\end{quote}
(we shall show that complete runs necessarily iterate this loop $R$ times, i.e.\ until counter~$\vr{b}$ becomes zero);

\item\label{e:comp.3}
every \add{\vr{x}_i}{1} command in $\mathcal{P}$ is replaced by two commands
\begin{quote}
\begin{algorithmic}
\State \add{\vr{x}_i}{1} \quad \sub{\hat{\vr{x}}_i}{1};
\end{algorithmic}
\end{quote}

\item\label{e:comp.4}
every \sub{\vr{x}_i}{1} command in $\mathcal{P}$ is replaced by two commands
\begin{quote}
\begin{algorithmic}
\State \sub{\vr{x}_i}{1} \quad \add{\hat{\vr{x}}_i}{1};
\end{algorithmic}
\end{quote}

\item\label{e:comp.5z}
every \testz{\vr{x}_i} command in $\mathcal{P}$ is replaced by the following code:
\begin{quote}
\begin{algorithmic}
\Loop
  \State \inc{\vr{x}_i} \quad \dec{\hat{\vr{x}}_i}
  \State \dec{\vr{d}}
\EndLoop
\State \dec{\vr{c}}
\Loop
  \State \dec{\vr{x}_i} \quad \inc{\hat{\vr{x}}_i}
  \State \dec{\vr{d}}
\EndLoop
\State \dec{\vr{c}}
\end{algorithmic}
\end{quote}
(we shall show that complete runs necessarily iterate each of the two loops $R$ times, i.e.\ they check that $\vr{x}_i$ equals $0$ through checking that $\hat{\vr{x}}_i$ equals $R$ by transferring $R$ from $\hat{\vr{x}}_i$ to $\vr{x}_i$ and then back);

\item\label{e:comp.5m}
every \testm{\vr{x}_i} command in $\mathcal{P}$ is replaced analogously, i.e.\ by the code as for \testz{\vr{x}_i} but with the increments and decrements of $\vr{x}_i$ and $\hat{\vr{x}}_i$ swapped;

\item\label{e:comp.6}
letting $\vr{y}_1, \ldots, \vr{y}_m$ (respectively, $\vr{z}_1, \ldots, \vr{z}_h$) be the counters that are required to be zero at termination of $\mathcal{A}$ (respectively, $\mathcal{P}$), the code of $\comp{\mathcal{A}}{\mathcal{P}}$ consists of the code of $\mathcal{A}$ concatenated with the code of $\mathcal{P}$ modified as stated, both without their {\halt} commands, and ending with the command
\begin{quote}
\begin{algorithmic}
\State \haltz{
\vr{d}, \vr{y}_1, \ldots, \vr{y}_m, \vr{z}_1, \ldots, \vr{z}_h}.
\end{algorithmic}
\end{quote}
\end{enumerate}

We remark that simulating zero tests of counters bounded by some $R$ using transfers from and to their complements is a well-known technique that can be found already in Lipton~\cite{lipton76}; the novelty here is the cumulative verification of such simulations, through decreasing appropriately the two counters $\vr{d}$ and $\vr{c}$ whose ratio is $R$, and checking that $\vr{d}$ is zero finally.

\subsection{Correctness}
\para{Correctness}

The next proposition states that the construction of $\comp{\mathcal{A}}{\mathcal{P}}$ is correct in the sense that its $B$-computed relations in counters of $\mathcal{P}$ are the same as those $R$-computed by~${\mathcal{P}}$.  (We shall treat any renamings of counters in step~(\ref{e:comp.1}) of the construction as implicit.)  In one direction, the proof proceeds by observing that $\comp{\mathcal{A}}{\mathcal{P}}$ can simulate faithfully any complete $R$-run of $\mathcal{P}$.
In the other direction we argue that although some of the loops introduced in steps (\ref{e:comp.5z}) and (\ref{e:comp.5m}) may iterate fewer than $R$ times and hence erroneously validate a test, the ways in which counters $\vr{c}$ and $\vr{d}$ are set up by $\mathcal{A}$ and used in the construction ensure that no such run can continue to a complete one.  Informally, as soon as a loop in a simulation of a test iterates fewer than $R$ times, the equality $\vr{d} = \vr{c} \cdot R$ turns into the strict inequality $\vr{d} > \vr{c} \cdot R$ which remains for the rest of the run, preventing counter~$\vr{d}$ from reaching zero.

\begin{proposition}
\label{pr:comp.corr}
For every valuation of counters of~$\mathcal{P}$, it occurs after a complete $B$-run of $\comp{\mathcal{A}}{\mathcal{P}}$ if and only if it occurs after a complete $R$-run of~$\mathcal{P}$.
\end{proposition}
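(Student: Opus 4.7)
The plan is to prove the two directions separately, both hinging on the invariant
\[
\vr{x}_i + \hat{\vr{x}}_i \leq R \text{ for all } i, \qquad \vr{d} \geq \vr{c} \cdot R,
\]
which I will show is maintained throughout every $B$-run of $\comp{\mathcal{A}}{\mathcal{P}}$ from the moment $\mathcal{A}$'s code has completed executing. The invariant holds at that moment because $\mathcal{A}$ is a $B$-amplifier by $R$ (so $\vr{d} = \vr{c} \cdot R$) and all counters of $\mathcal{P}$ together with the fresh $\hat{\vr{x}}_i$ are zero. The increments and decrements of steps (iii)/(iv) preserve $\vr{x}_i + \hat{\vr{x}}_i$ exactly, so the first conjunct is preserved. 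For the second, I will argue fragment by fragment: if the initial loop in (ii) iterates $K \leq R$ times (with $K \leq R$ forced by $\vr{b} = R$), then $\vr{d} - \vr{c} \cdot R$ changes by $R - K \geq 0$; if a test-simulation fragment in (v)/(vi) executes its two loops $K_1, K_2$ times (each bounded by $R$ via $\vr{x}_i + \hat{\vr{x}}_i \leq R$), then $\vr{d} - \vr{c} \cdot R$ changes by $2R - K_1 - K_2 \geq 0$. In particular, equality $\vr{d} = \vr{c} \cdot R$ is preserved exactly when every such loop iterates the maximal $R$ times.

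For the $(\Leftarrow)$ direction, given a complete $R$-run of $\mathcal{P}$, let $t$ be the total number of \testz\ and \testm\ commands executed along it. Since $\mathcal{A}$ $B$-computes $\{\tuple{R, c, c \cdot R} : c > 0\}$, I can prefix the run by a complete $B$-run of $\mathcal{A}$ ending with $\vr{b} = R$, $\vr{c} = 2t + 1$, $\vr{d} = (2t+1)R$. Then I iterate the inserted initial loop exactly $R$ times (zeroing out $\vr{b}$ and setting every $\hat{\vr{x}}_i = R$), followed by \dec{\vr{c}}, and finally simulate each instruction of the $R$-run of $\mathcal{P}$ step-by-step, running the inserted loops in (v)/(vi) exactly $R$ times for each test. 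The budget $2t+1$ on $\vr{c}$ exactly matches the total number of decrements, so at termination $\vr{c} = 0$, $\vr{d} = 0$, the $\vr{y}_j$'s are zero by $\mathcal{A}$'s termination clause, the $\vr{z}_k$'s are zero by $\mathcal{P}$'s, and so \haltz\ succeeds with the original final valuation on the counters of~$\mathcal{P}$.

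For the harder $(\Rightarrow)$ direction, consider any complete $B$-run of $\comp{\mathcal{A}}{\mathcal{P}}$. Its \haltz\ enforces $\vr{d} = 0$; combined with the invariant $\vr{d} \geq \vr{c} \cdot R$ and $\vr{c} \geq 0$, this forces $\vr{c} = 0$ and hence equality throughout the entire run after $\mathcal{A}$ completes. By the per-fragment analysis above, this in turn forces the initial loop to iterate exactly $R$ times (so $\hat{\vr{x}}_i = R$ for all $i$ just after it) and every test-simulation to run both of its loops exactly $R$ times. The latter means each \testz-simulation is entered with $\hat{\vr{x}}_i = R$, i.e.\ $\vr{x}_i = 0$, and restores the pre-test values of $\vr{x}_i$ and $\hat{\vr{x}}_i$; symmetrically for \testm. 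Projecting the sequence of configurations onto the counters of $\mathcal{P}$ thus yields a valid complete $R$-run of $\mathcal{P}$ with the required final valuation.

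The main technical obstacle is the fragment-level bookkeeping in the $(\Rightarrow)$ direction: once the equality $\vr{d} = \vr{c} \cdot R$ is lost (by any loop under-iterating), it can never be restored, since every subsequent fragment's contribution $R - K$ or $2R - K_1 - K_2$ is nonnegative. Making this monotonicity precise, and ruling out the possibility that a short-iterated loop is "compensated" elsewhere, is where the novelty of the cumulative verification technique resides and will need to be written out carefully.
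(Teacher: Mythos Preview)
Your proposal is correct and follows essentially the same approach as the paper: establish the forward invariant $\vr{x}_i+\hat{\vr{x}}_i\leq R$ and $\vr{d}\geq\vr{c}\cdot R$, then use the terminal requirement $\vr{d}=0$ to force equality and argue backwards that every inserted loop iterated exactly $R$ times. Your framing of $\vr{d}-\vr{c}\cdot R$ as a nonnegative, nondecreasing potential with per-fragment contributions $R-K$ or $2R-K_1-K_2$ is a slightly more explicit bookkeeping of the same idea the paper handles via Example~\ref{ex:arguments}. One small point worth spelling out in the $(\Rightarrow)$ direction: to invoke the amplifier property of $\mathcal{A}$ on the prefix of the run, you should note (as the paper does) that the counters $\vr{y}_1,\ldots,\vr{y}_m$ are untouched after $\mathcal{A}$'s code, so their being zero at the final {\halt} implies they were already zero when $\mathcal{A}$'s code finished, making that prefix a genuine complete $B$-run of~$\mathcal{A}$.
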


\begin{proof}
The `if' direction is straightforward: from a complete $R$-run of $\mathcal{P}$ with a total of $q$ zero and max tests, obtain a complete $B$-run of $\comp{\mathcal{A}}{\mathcal{P}}$ with the same final valuation of counters of $\mathcal{P}$ by
\begin{itemize}
\item
running $\mathcal{A}$ to termination with $\vr{b} = R$, $\vr{c} = 2 q + 1$, $\vr{d} = \vr{c} \cdot R$ and all of $\vr{y}_1, \ldots, \vr{y}_m$ equal to~$0$, where the latter counters will remain untouched for the rest of the run and hence satisfy the requirement to be zero finally (cf.\ step~(\ref{e:comp.6}) of the construction),
\item
iterating the loop in step~(\ref{e:comp.2}) $R$ times to initialise each complement counter $\hat{\vr{x}}_i$ to~$R$, which also subtracts $R$ and $1$ from $\vr{d}$ and $\vr{c}$ (respectively) as well as decreases $\vr{b}$ to~$0$, and
\item
in place of every zero or max test in~$\mathcal{P}$, iterating both loops in step (\ref{e:comp.5z}) or (\ref{e:comp.5m}) (respectively) $R$ times, which subtracts $2 R$ and $2$ from $\vr{d}$ and $\vr{c}$ (again respectively), eventually decreasing them both to~$0$.
\end{itemize}

For the `only if' direction, consider a complete $B$-run of $\comp{\mathcal{A}}{\mathcal{P}}$.  Extracting from it a complete $R$-run of $\mathcal{P}$ with the same final valuation of counters of $\mathcal{P}$ is easy once we show that, for each simulation of a \testz{\vr{x}_i} or \testm{\vr{x}_i} command by the code in step (\ref{e:comp.5z}) or (\ref{e:comp.5m}) of the construction, the values of $\vr{x}_i$ at the start and at the finish of the code are $0$ or $R$ (respectively).

Firstly, by step~(\ref{e:comp.6}) and the fact that counters $\vr{y}_1, \ldots, \vr{y}_m$ are not used after executing the part of code from $\mathcal{A}$, we have that the values of $\vr{b}$, $\vr{c}$ and $\vr{d}$ that have been provided by $\mathcal{A}$ satisfy $\vr{b} = R$ and $\vr{d} = \vr{c} \cdot R$.  After the code in step~(\ref{e:comp.2}) 
we therefore have that $\vr{x}_i + \hat{\vr{x}}_i \leq R$ for all~$i$.  Recalling the reasoning in Example~\ref{ex:arguments} and arguing forwards through the run, we infer that
\[\vr{x}_i + \hat{\vr{x}}_i \leq R \text{ for all $i$, and } \vr{d} \geq \vr{c} \cdot R\]
is an invariant that is maintained by the rest of the run.

Now, due to step~(\ref{e:comp.6}) again, 
$\vr{d}$ is zero finally, and so the inequality $\vr{d} \geq \vr{c} \cdot R$ is finally an equality.  Therefore, $\vr{c}$ is zero finally as well.  Recalling again the reasoning in Example~\ref{ex:arguments} and arguing backwards through the run, we conclude that in fact $\vr{d} = \vr{c} \cdot R$ has been maintained and that, for each simulation of a \testz{\vr{x}_i} or \testm{\vr{x}_i} command, each of the two loops has been iterated exactly $R$ times, and hence the values of $\vr{x}_i$ at its start and at its finish have been as required.  Also, the loop introduced in step~(\ref{e:comp.2}) has been iterated $R$ times, and $\vr{b}$ is zero finally.
\end{proof}

\section{Factorial Amplifier}
\label{s:exp}

This section is the technical core of the paper.  It provides a single
program~$\mathcal{F}$ called the factorial amplifier which is, for any positive integer~$k$, a
$k$-amplifier by $k!$.  Together with the composition operator from
Section~\ref{s:elim.zero}, we shall then have all the tools needed for
obtaining our main result in Section~\ref{s:main}: chains of
compositions of $\mathcal{F}$ with itself will yield amplifiers by ratios which are towers of exponentials.

\subsection{A simple program}
\label{subs:simple}

As a warm up for the presentation of the main program and the proof of its correctness, let us consider a simpler program~$\mathcal{E}$ specified in Algorithm~\ref{a:E}. Two macros are used to aid readability, and we now expand them, noting that hidden within them is another counter~$\vr{i}'$:
\begin{description}
\item[\sub{\vr{x}}{\vr{i}}:]
To subtract the current value of counter~$\vr{i}$, we employ the auxiliary counter~$\vr{i}'$ to which the value of $\vr{i}$ is transferred and then transferred back.  At the start of the code, $\vr{i}'$ is assumed to be zero, and the same is guaranteed at the finish.
\begin{quote}
\begin{algorithmic}
\Loop
  \State \dec{\vr{i}} \quad \inc{\vr{i}'} \quad \dec{\vr{x}}
\EndLoop
\State \testz{\vr{i}}
\Loop
  \State \dec{\vr{i}'} \quad \inc{\vr{i}}
\EndLoop
\State \testz{\vr{i}'}
\end{algorithmic}
\end{quote}

\item[\add{\vr{x}'}{\vr{i} + 1}:]
This is very similar, except for the extra increment of~$\vr{x}'$.
\begin{quote}
\begin{algorithmic}
\State \inc{\vr{x}'}
\Loop
  \State \dec{\vr{i}} \quad \inc{\vr{i}'} \quad \inc{\vr{x}'}
\EndLoop
\State \testz{\vr{i}}
\Loop
  \State \dec{\vr{i}'} \quad \inc{\vr{i}}
\EndLoop
\State \testz{\vr{i}'}
\end{algorithmic}
\end{quote}
\end{description}

\begin{algorithm}
\caption{Counter program~$\mathcal{E}$.}
\label{a:E}
\begin{algorithmic}[1]
\Statex //\emph{Untested counters}:~$\vr{x}$, $\vr{y}$, $\vr{x}'$
\Statex //\emph{Tested counters}:~$\vr{i}$, $\vr{i}'$
\State \inc{\vr{i}} \quad
       \inc{\vr{x}} \quad \inc{\vr{y}}
\Loop
  \State \inc{\vr{x}} \quad \inc{\vr{y}}
\EndLoop
\Loop
  \Loop
    \State \sub{\vr{x}}{\vr{i}} \quad \add{\vr{x}'}{\vr{i}+1}
  \EndLoop
  \Loop
    \State \dec{\vr{x}'} \quad \inc{\vr{x}}
  \EndLoop
  \State \inc{\vr{i}}
\EndLoop
\State \testm{\vr{i}}
\Loop
  \State \sub{\vr{x}}{\vr{i}} \quad \dec{\vr{y}}
\EndLoop
\State \haltz{\vr{y}}
\end{algorithmic}
\end{algorithm}

Program~$\mathcal{E}$ has untested counters $\vr{x}$, $\vr{x}'$ and $\vr{y}$, and tested counters $\vr{i}$ and $\vr{i}'$.  Assuming that the bound for the tested counters is a positive integer~$k$, the program does the following:
\begin{itemize}
\item
initialises $\vr{x}$ and $\vr{y}$ to some positive integer~$a$ chosen nondeterministically, which will be kept unchanged in counter~$\vr{y}$ until the final loop;
\item
in each iteration of the main loop, uses counter~$\vr{x}'$ to attempt to multiply counter~$\vr{x}$ by the fraction $(\vr{i} + 1) / \vr{i}$;
\item
by the final loop and the terminal check that counter~$\vr{y}$ is zero, halts provided the value of $\vr{x}$ is at least $a \cdot k$ (in which case it will be exactly $a \cdot k$).
\end{itemize}

The first and easier part of the exercise is to show that, for any positive~$a$, there exists a complete $k$-run of $\mathcal{E}$ that initialises $\vr{x}$ and $\vr{y}$ to~$a$, then multiplies $\vr{x}$ exactly by all the fractions $(i + 1) / i$ for $i = 1, \ldots, k-1$, and finally checks that $\vr{x}$ equals $a \cdot k$.

The second part is to show the converse, i.e.\ that any complete $k$-run of $\mathcal{E}$ is of that form.  As a hint, we remark that this is the case because as soon as a multiplication of $\vr{x}$ by a fraction $(i + 1) / i$ does not complete accurately (because either the first inner loop does not decrease $\vr{x}$ to exactly zero, or the second inner loop does not decrease $\vr{x}'$ to exactly zero), it will not be possible to repair that error in the rest of the run, in the sense that the value of $\vr{x}$ at the end of the main loop will necessarily be strictly smaller than $a \cdot k$ and thus it will be impossible to complete the run.  We also remark that this vitally depends on the fact that all the fractions $(i + 1) / i$ are greater than~$1$.

\subsection{Amplifiers}

The definition of $\mathcal{F}$ in Algorithm~\ref{a:F} is presented at a high level for readability.  In addition to the two macros for subtracting $\vr{i}$ and adding $\vr{i}+1$ presented in the previous subsection,
one further macro is used:
\begin{description}
\item[\textbf{loop at most} $\vr{b}$ \textbf{times} $<${\it body}$>$:]
To express this construct, we employ the auxiliary counter $\vr{b}'$ to which the value of $\vr{b}$ is transferred and then transferred back.  Provided $\vr{b}'$ is zero at the start, the body is indeed performed at most $\vr{b}$ times.
\begin{quote}
\begin{algorithmic}
\Loop
  \State \dec{\vr{b}} \quad \inc{\vr{b}'}
\EndLoop
\Loop
  \State \dec{\vr{b}'} \quad \inc{\vr{b}}
  \State $<${\it body}$>$
\EndLoop
\end{algorithmic}
\end{quote}
\end{description}

\begin{algorithm}
\caption{Factorial Amplifier~$\mathcal{F}$.}
\label{a:F}
\begin{algorithmic}[1]
\Statex //\emph{Untested counters}:~$\vr{b}$, $\vr{b}'$, $\vr{c}$,
$\vr{c'}$, $\vr{d}$, $\vr{d}'$, $\vr{x}$, $\vr{y}$
\Statex //\emph{Tested counters}:~$\vr{i}$, $\vr{i}'$
\State \inc{\vr{i}} \quad \inc{\vr{b}} \quad
       \inc{\vr{c}} \quad \inc{\vr{d}} \quad \inc{\vr{x}} \quad \inc{\vr{y}}
\Loop
  \State \inc{\vr{c}} \quad \inc{\vr{d}} \quad \inc{\vr{x}} \quad \inc{\vr{y}}
\EndLoop
\Loop \label{l:outer}
  \Loop \label{l:upp.b} \label{l:upp.o}
    \State \sub{\vr{c}}{\vr{i}} \quad \inc{\vr{c}'} \label{l:c}
    \Loopatmost{\vr{b}} \label{l:upp.i}
      \State \sub{\vr{d}}{\vr{i}} \quad \sub{\vr{x}}{\vr{i}} \quad
             \add{\vr{d}'}{\vr{i}+1} \label{l:ddu}
    \EndLoop
  \EndLoop
  \Loop \label{l:bu}
    \State \dec{\vr{b}} \quad \add{\vr{b}'}{\vr{i}+1} \label{l:upp.e}
  \EndLoop
  \Loop \label{l:low.b} \label{l:bl}
    \State \dec{\vr{b}'} \quad \inc{\vr{b}}
  \EndLoop
  \Loop \label{l:low.o}
    \State \dec{\vr{c}'} \quad \inc{\vr{c}}
    \Loopatmost{\vr{b}} \label{l:low.i}
      \State \dec{\vr{d}'} \quad
             \inc{\vr{d}} \quad \inc{\vr{x}} \label{l:low.e} \label{l:ddl}
    \EndLoop
  \EndLoop
  \State \inc{\vr{i}} \label{l:next}
\EndLoop
\State \testm{\vr{i}}
\Loop \label{l:f.l}
  \State \sub{\vr{x}}{\vr{i}} \quad \dec{\vr{y}}
\EndLoop
\State \haltz{\vr{y}}
\end{algorithmic}
\end{algorithm}

Observe that the untested counters of program~$\mathcal{F}$ are $\vr{b}$, $\vr{b}'$, $\vr{c}$, $\vr{c'}$, $\vr{d}$, $\vr{d}'$, $\vr{x}$ and $\vr{y}$, and the tested ones are $\vr{i}$ and~$\vr{i}'$ (counter $\vr{i}'$ is hidden in the macros).

\subsection{Correctness}

Before proving that, for any positive integer~$k$, the program~$\mathcal{F}$ is a $k$-amplifier by $k!$, which is the main technical argument in the paper, we provide some intuitions:
\begin{itemize}
\item
the counter~$\vr{d}$ is used to preserve the value of $\vr{x}$ at the end of the main loop, since $\vr{x}$ is modified in the final loop;
\item
the counter~$\vr{d}'$ acts as the auxiliary counter for both $\vr{d}$ and $\vr{x}$, so there is no need to have $\vr{x}'$ as well;
\item
the counter~$\vr{c}$ is initialised to the same positive integer~$a$ as $\vr{d}$, $\vr{x}$ and $\vr{y}$, whereas the counter~$\vr{b}$ is initialised to~$1$;
\item
at the start of any iteration of the main loop in a complete run, the invariant $\vr{d} = \vr{c} \cdot \vr{b}$ will hold, and so the first inner loop will divide $\vr{c}$ by~$\vr{i}$ accurately;
\item
in order for the last inner loop to transfer $\vr{d}'$ fully to $\vr{d}$ and $\vr{x}$, the middle two inner loops will necessarily multiply $\vr{b}$ by $\vr{i} + 1$ accurately;
\item
at the end of the main loop, $\vr{d}$, $\vr{c}$ and $\vr{b}$ will have values $a \cdot k$, $a / (k - 1)!$ and $k!$ (respectively), and in particular $a$ is necessarily divisible by $(k - 1)!$.
\end{itemize}

\begin{lemma}
\label{lem:maintech}
For any positive integer~$k$, the program~$\mathcal{F}$ is a
$k$-amplifier by $k!$, i.e.\ the relation it $k$-computes in counters $\vr{b}, \vr{c}, \vr{d}$ is
\[\{\tuple{b, c, d} \;:\; b = k!,\ c > 0,\ d = c \cdot b\}.\]
\end{lemma}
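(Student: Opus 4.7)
The plan is to prove both inclusions between the computed relation and the claimed one. Throughout, I would write $a$ for the common value to which $\vr{c}, \vr{d}, \vr{x}, \vr{y}$ are set by the initialisation in lines~1--3 (with $a \geq 1$), and $\vr{c}_i, \vr{b}_i, \vr{d}_i$ for the values of those counters at the start of the $i$-th iteration of the main loop. Since the loop runs for $\vr{i} = 1, \ldots, k-1$ before the max test can fire, there are $k - 1$ such iterations.

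For the easy (completeness) direction, given $c > 0$ I would set $a := c \cdot (k-1)!$ and describe the run in which every nondeterministic choice is maximal. A direct iteration-by-iteration check shows that this maintains the invariant $\vr{c}_i = a/(i-1)!$, $\vr{b}_i = i!$, $\vr{d}_i = \vr{x}_i = a i$, with all auxiliary residues ($\vr{b}'$, $\vr{c}'$, $\vr{d}'$, $\vr{i}'$) kept at zero; after $k - 1$ iterations the max test is satisfied, and the final loop zeroes $\vr{y}$ by subtracting $a$ copies of $\vr{i} = k$ from $\vr{x} = a k$.

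The harder uniqueness direction occupies the rest of the argument. I would first observe that $\vr{x}$ and $\vr{d}$ undergo identical changes throughout the main loop, so $\vr{d}_i = \vr{x}_i$ at every iteration boundary, and then track the sum $\vr{x} + \vr{d}'$ as a clean global invariant: each execution of the upper-inner body increases this sum by exactly one (since $\vr{x}$ drops by $i$ and $\vr{d}'$ climbs by $i + 1$) and is subject to the bound $\vr{x}/i \leq (\vr{x} + \vr{d}')/i$, whereas each execution of the lower-inner body leaves the sum unchanged. This yields $(\vr{x} + \vr{d}')_{i+1} \leq (\vr{x} + \vr{d}')_i \cdot (i+1)/i$ and hence $\vr{x}_k + \vr{d}'_k \leq a k$ inductively. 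On the other hand, the halt requirement $\vr{y} = 0$ together with the final loop (which subtracts $k$ from $\vr{x}$ and $1$ from $\vr{y}$ per pass) forces $\vr{x}_k \geq a k$. Combining, $\vr{x}_k = \vr{d}_k = a k$ and $\vr{d}'_k = 0$ with every intermediate inequality tight; this pins down the total upper-inner count to $a$ and the total lower-inner count to $a(i+1)$ in each iteration, and forces $\vr{d}'$ to vanish at every iteration boundary.

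A concluding induction on $i$ then pins down $\vr{c}_i = a/(i-1)!$, $\vr{b}_i = i!$, and the vanishing of the $\vr{b}'$ and $\vr{c}'$ residues at every iteration boundary. The step exploits that $\vr{b} + \vr{b}'$ is preserved both by the macro expansions and by the lower-outer body, but grows by $i$ per execution of the middle-upper body, so the fixed totals on upper- and lower-inner executions squeeze all the inequalities tight, leaving no freedom in the auxiliary counters either. Applied at $i = k$, this yields an output $(\vr{b}, \vr{c}, \vr{d}) = (k!, a/(k-1)!, a k)$; integrality of $\vr{c}$ then forces $a$ to be a positive multiple of $(k-1)!$, and setting $c := a/(k-1)!$ gives $d = c \cdot k!$ as required. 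The hard part of the argument is precisely this final induction, because the auxiliary counters $\vr{b}'$ and $\vr{c}'$ can a priori carry residue across iteration boundaries; only the combined tightness of all the constraints above rules that out.
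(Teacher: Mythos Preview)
Your proposal is correct and follows essentially the same approach as the paper: the completeness direction via the maximal run with $a = c\cdot(k-1)!$, and the soundness direction by first bounding the invariant sum $\vr{x}+\vr{d}'$ (equivalently the paper's $\vr{d}+\vr{d}'$, since $\vr{x}=\vr{d}$ throughout the main loop) by $a\cdot k$ via the telescoping product, using the {\halt} condition on $\vr{y}$ to force all those inequalities tight, and then running an induction on~$i$ that exploits the invariance of $\vr{b}+\vr{b}'$ under the \textbf{loop at most} macro together with the now-fixed inner-loop counts to pin down $\vr{b},\vr{b}',\vr{c},\vr{c}'$. The only differences are cosmetic---your start-of-iteration indexing versus the paper's end-of-iteration indexing, and your slightly more narrative packaging versus the paper's decomposition into Claims~1--4.
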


\begin{proof}
We shall be considering $k$-runs of $\mathcal{F}$ whose initial valuation assigns zero to every counter, and which are either halted or blocked at the {\halt} command because $\vr{y}$ is not zero.  In particular, any such run will have completed the main loop, which runs for $\vr{i} = 1, \ldots, k - 1$.  Hence, we can introduce the following notations for counter values during the $i^{\text{th}}$ iteration of the loop, where $\vr{v}$ is any of the counters $\vr{b}$, $\vr{b}'$, $\vr{c}$, $\vr{c}'$, $\vr{d}$,~$\vr{d}'$:
\begin{description}
\item[$\bar{\vr{v}}_{i}$:]
the final value of $\vr{v}$ after lines \ref{l:upp.b}--\ref{l:upp.e};
\item[$\vr{v}_{i}$:]
the final value of $\vr{v}$ after lines \ref{l:low.b}--\ref{l:low.e}.
\end{description}
It will also be convenient to write $\vr{v}_0$ for the value of $\vr{v}$ at the start of the first iteration of the main loop.  We emphasise that these notations are relative to the run under consideration, which for readability is not written explicitly.

The proof works for any positive integer~$k$ and consists of two parts, that establish the two inclusions between the relation $k$-computed by $\mathcal{F}$ in counters $\vr{b}, \vr{c}, \vr{d}$ and the relation in the statement of the lemma.

The first part, where we assume $b = k!$, $c > 0$ and $d = c \cdot b$, and argue that $\mathcal{F}$ has a complete $k$-run whose final values of counters $\vr{b}, \vr{c}, \vr{d}$ are exactly $b, c, d$, is the easier part.

\begin{claim}
\label{cl:1}
For any $a$ divisible by $(k-1)!$, the program $\mathcal{F}$ has a complete $k$-run which satisfies the equalities in Table~\ref{t:c.r}.
\end{claim}

\begin{table}[h]
\caption{Equalities for counter values in complete $k$-runs of program~$\mathcal{F}$, for all $i = 0, \ldots, k - 1$.}
\label{t:c.r}
\[\begin{array}{rcl@{\qquad}rcl@{\qquad}rcl}
\vr{b}_{0} & = & 1 &
\vr{c}_{0} & = & a &
\vr{d}_{0} & = & a \\
\vr{b}'_{0} & = & 0 &
\vr{c}'_{0} & = & 0 &
\vr{d}'_{0} & = & 0 \\
\bar{\vr{b}}_{i} & = & 0 &
\bar{\vr{c}}_{i} & = & 0 &
\bar{\vr{d}}_{i} & = & 0 \\
\bar{\vr{b}}'_{i} & = & \vr{b}_{i - 1} \cdot (i+1) &
\bar{\vr{c}}'_{i} & = & \vr{c}_{i - 1} / i &
\bar{\vr{d}}'_{i} & = & \vr{d}_{i - 1} \cdot (i+1) / i \\
\vr{b}_{i} & = & \bar{\vr{b}}'_{i} &
\vr{c}_{i} & = & \bar{\vr{c}}'_{i} &
\vr{d}_{i} & = & \bar{\vr{d}}'_{i} \\
\vr{b}'_{i} & = & 0 &
\vr{c}'_{i} & = & 0 &
\vr{d}'_{i} & = & 0
\end{array}\]
\end{table}


\begin{proof}[Proof of Claim~\ref{cl:1}]
Such a run can be built by iterating each inner nondeterministic loop the maximum number of times.  Namely, during iteration $i$ of the main loop:
\begin{itemize}
\item
the loop at line~\ref{l:upp.o} is iterated $\vr{c}_{i - 1} / i$ times and in each pass the loop at line~\ref{l:upp.i} is iterated $\vr{b}_{i - 1}$ times;
\item
the loop at line~\ref{l:bu} is iterated $\vr{b}_{i - 1}$ times;
\item
the loop at line~\ref{l:bl} is iterated $\bar{\vr{b}}_{i}$ times;
\item
the loop at line~\ref{l:low.o} is iterated $\bar{\vr{c}}'_{i}$ times and in each pass the loop at line~\ref{l:low.i} is iterated $\vr{b}_{i}$ times.
\end{itemize}
The divisibility of $a$ by $(k-1)!$ ensures that all divisions in the statement of the claim yield integers.

To see that the run thus obtained can be completed, observe that from the equalities in Table~\ref{t:c.r} it follows that
\begin{align*}
\vr{b}_{k-1} = \prod_{i = 1}^{k-1} \left(i+1\right) = k! \qquad
\vr{c}_{k-1} = a \cdot \prod_{i = 1}^{k-1} \frac{1}{i} = \frac{a}{(k-1)!} \qquad
\vr{d}_{k-1} = a \cdot \prod_{i = 1}^{k-1} \frac{i+1}{i} = a \cdot k. 
\end{align*}
In particular, at the start of the final loop (at line~\ref{l:f.l}), counter $\vr{x}$ equals counter $\vr{d}$ and hence has value $a \cdot k$, and counter $\vr{y}$ has value~$a$.  Iterating the final loop $a$ times therefore reduces $\vr{y}$ (and~$\vr{x}$) to zero as required.
\end{proof}

To obtain $b, c, d$ as the final values of counters $\vr{b}, \vr{c}, \vr{d}$, we apply Claim~\ref{cl:1} with $a = c \cdot (k-1)!$.

We now turn to the remaining second part of the proof of the lemma, where we consider any complete $k$-run and need to show that the final values $b, c, d$ of counters $\vr{b}, \vr{c}, \vr{d}$ satisfy $b = k!$, $c > 0$ and $d = c \cdot b$.

\begin{claim}
\label{cl:2}
For all $i = 1, \ldots, k-1$, we have:
\begin{itemize}
\item
$\bar{\vr{d}}_{i} + \bar{\vr{d}}'_{i} \leq (\vr{d}_{i - 1} + \vr{d}'_{i - 1}) \cdot (i+1) / i$;
\item
$\bar{\vr{d}}_{i} + \bar{\vr{d}}'_{i} =    (\vr{d}_{i - 1} + \vr{d}'_{i - 1}) \cdot (i+1) / i$
if and only if $\bar{\vr{d}}_{i} = \vr{d}'_{i - 1} = 0$;
\item
$\vr{d}_{i} + \vr{d}'_{i} = \bar{\vr{d}}_{i} + \bar{\vr{d}}'_{i}$.
\end{itemize}
\end{claim}

\begin{proof}[Proof of Claim~\ref{cl:2}]
Straightforward calculation based on $(i+1) / i > 1$.
\end{proof}

Let $a$ denote the value of counters $\vr{c}$, $\vr{d}$, $\vr{x}$ and $\vr{y}$ at the start of the main loop.

\begin{claim}
\label{cl:3}
The equalities in Table~\ref{t:c.r} for the values of counters $\vr{d}$ and $\vr{d}'$ are satisfied.
\end{claim}

\begin{proof}[Proof of Claim~\ref{cl:3}]
First, recall that at the start of the final loop (at line~\ref{l:f.l}) counters $\vr{x}$ and $\vr{d}$ are equal, and by Claim~\ref{cl:2} they have value at most $a \cdot k$.  Since counter~$\vr{y}$ has value $a$ at that point and the run is complete, it must actually be the case that the value of $\vr{x}$ here equals $a \cdot k$.  By Claim~\ref{cl:2} again, we infer that for all $i = 1, \ldots, k-1$, we indeed have:
\[\bar{\vr{d}}_{i} = 0 \qquad
  \bar{\vr{d}}'_{i} = \vr{d}_{i - 1} \cdot \frac{i+1}{i} \qquad
  \vr{d}_{i} = \bar{\vr{d}}'_{i} \qquad
  \vr{d}'_{i} = 0. \qedhere\]
\end{proof}

\begin{claim}
\label{cl:4}
We have that $a$ is divisible by $(k-1)!$ and that the equalities in Table~\ref{t:c.r} for the values of counters $\vr{b}$, $\vr{b}'$, $\vr{c}$ and $\vr{c}'$ are satisfied.
\end{claim}

\begin{proof}[Proof of Claim~\ref{cl:4}]
That $a$ is divisible by $(k-1)!$ will follow once we establish the equalities for the values of $\vr{c}$ and $\vr{c}'$, since they involve dividing $a$ by $(k-1)!$.

For the rest of the claim, we argue inductively, where the hypothesis is that the equalities for the values of $\vr{b}$, $\vr{b}'$, $\vr{c}$ and $\vr{c}'$ are satisfied for all indices less than $i$.  Consequently, recalling Claim~\ref{cl:3}, we have that 
\begin{align} \label{eq:indhip}
\vr{d}_{i - 1} = \vr{c}_{i - 1} \cdot \vr{b}_{i - 1}.
\end{align}

Consider the iteration $i$ of the main loop.  We infer from Claim~\ref{cl:3} that the commands in line~\ref{l:ddu} must have been performed $\vr{d}_{i - 1} / i$ times.  Hence, as the values of counters $\vr{b}$ and $\vr{b'}$ remain unchanged until line~\ref{l:bu}, using equation~\eqref{eq:indhip} we deduce that the commands in line~\ref{l:c} must have been performed $\vr{c}_{i-1} / i$ times, and we have:
\[\bar{\vr{c}}_{i} = 0 \qquad
  \bar{\vr{c}}'_{i} = \vr{c}_{i - 1} / i.\]
Also by Claim~\ref{cl:3}, the commands in line~\ref{l:ddl} must have been performed $\bar{\vr{d}}'_{i} = \vr{d}_{i- 1} \cdot (i+1)/i$ times.  From what we have just shown, that number equals $\bar{\vr{c}}'_{i} \cdot \vr{b}_{i - 1} \cdot (i+1)$, and so we conclude that indeed:
\[\begin{array}[b]{rcl@{\qquad}rcl@{\qquad}rcl}
\bar{\vr{b}}_{i} & = & 0 &
\vr{b}_{i} & = & \bar{\vr{b}}'_{i} &
\vr{c}_{i} & = & \bar{\vr{c}}'_{i} \\
\bar{\vr{b}}'_{i} & = & \vr{b}_{i - 1} \cdot (i+1) &
\vr{b}'_{i} & = & 0 &
\vr{c}'_{i} & = & 0.
\end{array}\qedhere\]
\end{proof}

As in the first part, we now conclude that the final values $b, c, d$ of counters $\vr{b}, \vr{c}, \vr{d}$ are $k!, a/(k-1)!, a \cdot k$, and in particular $c \cdot b = a \cdot k! / (k-1)! = d$.
\end{proof}

\section{Main Result}
\label{s:main}

As already indicated, the bulk of the work for our headline result,
namely \tower-hardness of the reachability problem for Petri nets, is
showing how to construct an amplifier without tested counters and for a ratio which is a tower of exponentials.  Most of the pieces have already been developed in Sections \ref{s:elim.zero} and \ref{s:exp}, and here we put them together to obtain a linear-time construction (although any elementary complexity of the reduction would suffice for the \tower-hardness).

\begin{lemma}
\label{lm:main}
An amplifier by $3!^n$ without tested counters is computable in time $O(n)$.
\end{lemma}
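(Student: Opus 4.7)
The plan is to iterate the composition operator $\rhd$ of Section~\ref{s:elim.zero} with the factorial amplifier $\mathcal{F}$ of Section~\ref{s:exp}, starting from a base amplifier without tested counters. Let $\mathcal{A}_0$ be the explicit amplifier of Example~\ref{ex:ratio} instantiated with $R = 3$: this program has no tested counters and, for every positive integer $B$, is a $B$-amplifier by $3 = 3!^0$. For $i = 1, \ldots, n$, define $\mathcal{A}_i = \comp{\mathcal{A}_{i-1}}{\mathcal{F}}$, viewing the right-hand copy of $\mathcal{F}$ via Lemma~\ref{lem:maintech} (with $k = 3!^{i-1}$) as a $3!^{i-1}$-amplifier by $(3!^{i-1})! = 3!^i$.

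To verify correctness I would show by induction on $i$ that $\mathcal{A}_i$ is, for every positive integer $B$, a $B$-amplifier by $3!^i$ with no tested counters. The base case $i = 0$ is by construction. For the inductive step, the fact that $\mathcal{A}_{i-1}$ has no tested counters makes it simultaneously a $B$-amplifier by $3!^{i-1}$ for every $B$, so Proposition~\ref{pr:comp.corr}, applied to $\mathcal{A}_{i-1}$ and to $\mathcal{F}$ in the form given by Lemma~\ref{lem:maintech}, yields that $\mathcal{A}_i$ $B$-computes in the renamed copies of $\mathcal{F}$'s counters $\vr{b}, \vr{c}, \vr{d}$ exactly the relation $\{\tuple{b, c, d} : b = 3!^i,\ c > 0,\ d = c \cdot b\}$. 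Inspection of steps (ii)--(vi) of the construction of $\comp{\cdot}{\cdot}$ shows that the only tested counters of $\comp{\mathcal{A}}{\mathcal{P}}$ are those inherited from $\mathcal{A}$ (the tested counters of $\mathcal{P}$ become untested through the simulation of their tests), so $\mathcal{A}_i$ also has no tested counters.

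For the time bound I would observe that $\mathcal{A}_0$ and $\mathcal{F}$ both have constant size and a constant number of tested counters, so each composition step appends only $O(1)$ commands and $O(1)$ fresh counters to $\mathcal{A}_{i-1}$. Hence $|\mathcal{A}_n| = O(n)$ and $\mathcal{A}_n$ can be produced in $O(n)$ time.

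The main conceptual point of the plan is the insistence that every $\mathcal{A}_i$ in the chain be without tested counters: Proposition~\ref{pr:comp.corr} forces the bound on the composition's tested counters to coincide with that of the left argument, and only the absence of tested counters lets a single $\mathcal{A}_{i-1}$ serve as the required $B$-amplifier by $3!^{i-1}$ for whichever $B$ the next stage of the induction demands. Once this invariant is in place, everything else is a routine application of the tools already established.
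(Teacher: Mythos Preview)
Your proposal is correct and follows essentially the same approach as the paper: start from the trivial amplifier by~$3$ of Example~\ref{ex:ratio} and compose $n$ times with~$\mathcal{F}$, invoking Proposition~\ref{pr:comp.corr} and Lemma~\ref{lem:maintech} at each step. You spell out the induction and the preservation of the no-tested-counters invariant more explicitly than the paper does, but the argument is the same.
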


\begin{proof}
Letting $\mathcal{A}$ be a trivial amplifier by~$3$ (cf.\ Example~\ref{ex:ratio}), the program
\[\overbrace{\comp{(\comp{(\comp{\mathcal{A}}{\mathcal{F}})}{\mathcal{F}})}{\cdots \mathcal{F}}}^{n \text{ compositions}}\]
is an amplifier by $3!^n$ without tested counters by Proposition~\ref{pr:comp.corr} and Lemma~\ref{lem:maintech}, and it is computable in time $O(n)$ by the definition of the composition operator (cf.\ Section~\ref{s:elim.zero}).
\end{proof}

\begin{theorem} \label{thm:main}
The Petri nets reachability problem is \tower-hard.
\end{theorem}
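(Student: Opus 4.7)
The plan is to obtain \tower-hardness by a linear-time reduction from the canonical \tower-complete problem recalled at the end of Section~\ref{s:c.p}, namely deciding whether a given counter program $\mathcal{P}$ of size~$n$ (whose counters are all tested) has a complete $3!^n$-run. All the machinery for this reduction has been assembled: the composition operator $\rhd$ from Section~\ref{s:elim.zero} translates a program with tested counters bounded by some ratio~$R$ into an equivalent program whose tested counters are bounded by whatever bound the amplifier has, and Lemma~\ref{lm:main} delivers exactly an amplifier by $R = 3!^n$ that itself has no tested counters.

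The reduction would proceed as follows. Given the input program $\mathcal{P}$ of size~$n$, I would first compute in time $O(n)$, using Lemma~\ref{lm:main}, an amplifier $\mathcal{A}_n$ by $3!^n$ without tested counters. I would then output the program $\comp{\mathcal{A}_n}{\mathcal{P}}$. Since $\mathcal{A}_n$ has no tested counters, the construction of the composition operator (in particular, the fact that it introduces only the counters already present in $\mathcal{A}_n$ and $\mathcal{P}$ plus the fresh untested complements $\hat{\vr{x}}_i$ of the tested counters of $\mathcal{P}$) produces a program $\comp{\mathcal{A}_n}{\mathcal{P}}$ that also has no tested counters. By the definition of the Petri nets reachability problem as recalled in Section~\ref{s:c.p}, such a program is precisely an instance of that problem, asking whether a complete run exists.

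Correctness follows immediately from Proposition~\ref{pr:comp.corr}: the program $\comp{\mathcal{A}_n}{\mathcal{P}}$ has a complete run (necessarily a $B$-run for any $B$, since there are no tested counters) if and only if $\mathcal{P}$ has a complete $3!^n$-run. For the complexity of the reduction, the amplifier $\mathcal{A}_n$ is produced in time $O(n)$ by Lemma~\ref{lm:main}, the input $\mathcal{P}$ has size~$n$, and the composition operator is manifestly linear in the combined size of its two arguments (each tested counter of $\mathcal{P}$ causes a constant-size local rewriting of its increments, decrements, and tests). Hence the overall reduction runs in time $O(n)$, which is in particular elementary, so the \tower-completeness of the source problem transfers to \tower-hardness of Petri nets reachability.

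There is essentially no obstacle left at this stage: all the creative work sits in Sections~\ref{s:elim.zero} and~\ref{s:exp} and in Lemma~\ref{lm:main}, and the theorem is just the act of chaining those results together and observing that the output is syntactically a Petri net.
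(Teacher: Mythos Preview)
Your proposal is correct and follows essentially the same approach as the paper's proof: reduce from the \tower-complete problem of Section~\ref{s:c.p}, compose the input program with the amplifier of Lemma~\ref{lm:main}, and invoke Proposition~\ref{pr:comp.corr} for correctness. Your write-up is somewhat more explicit about why the output has no tested counters and why the reduction is linear, but the argument is the same.
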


\begin{proof}
We reduce in linear time from the \tower-complete halting problem for counter programs of size~$n$ with all counters tested and bounded by $3!^n$ (cf.\ Section~\ref{s:c.p}).

Let $\mathcal{M}$ be such a program, and let $\mathcal{T}$ be an
amplifier by $3!^n$ without tested counters which is computable in time $O(n)$ by Lemma~\ref{lm:main}. We have that the composite program $\comp{\mathcal{T}}{\mathcal{M}}$ is without tested counters, and that by Proposition~\ref{pr:comp.corr} it has a complete run if and only if the given program~$\mathcal{M}$ does.
\end{proof}

\begin{corollary}
\label{c:h.expspace}
For any positive integer~$h$, the Petri nets reachability problem with $h + 13$ counters is $h$-\expspace-hard.%
\footnote{We remark that, in the terminology of the classical definition of Petri nets~\cite{Petri62}, the number of places will be $h + 16$ due to $3$ extra places for encoding the control of counter programs.}
\end{corollary}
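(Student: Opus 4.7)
The plan is to follow the structure of Theorem \ref{thm:main}: reduce in linear time from the $h$-\expspace-complete halting problem for counter programs of polynomial size whose tested counters are bounded by $3!^h$ (the analogue, with height fixed at $h$, of the {\tower}-complete problem in Section \ref{s:c.p}; its $h$-\expspace-completeness follows from the same robustness results cited there). The reduction instantiates Lemma \ref{lm:main} with exactly $h$ applications of the factorial amplifier $\mathcal{F}$, yielding an amplifier by $3!^h$ without tested counters, which is then composed with the input program via Proposition \ref{pr:comp.corr}. The refinement is to bound the total number of counters by $h + 13$.

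The key technical ingredient is counter reuse: a naive implementation of the composition operator contributes roughly 12 fresh counters per layer, because each copy of $\mathcal{F}$ brings its 10 counters plus 2 complements introduced for its tested counters $\vr{i}, \vr{i}'$ in step~(\ref{e:comp.2}). To drop the per-layer overhead to a single counter, I would observe that after each invocation of $\mathcal{F}$, all of its counters except the output triple $(\vr{b}, \vr{c}, \vr{d})$ can be guaranteed to be zero: the auxiliaries $\vr{b}', \vr{c}', \vr{d}', \vr{x}, \vr{y}, \vr{i}'$ are zero by the structure of $\mathcal{F}$ established in Lemma \ref{lem:maintech}, and the tested counter $\vr{i}$ can be zeroed by appending a short closing loop to $\mathcal{F}$ before its \halt. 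Moreover, $\vr{b}$ is subsequently zeroed by the complement-initialization phase in step~(\ref{e:comp.2}) of the next composition, and the complements $\hat{\vr{i}}, \hat{\vr{i}}'$ of the inner layer end at zero by the bound invariant $\vr{i} + \hat{\vr{i}} = R$. A refined composition operator may therefore identify the inner program's auxiliary counters with such known-zero counters from the outer layer, keeping a constant ``working set'' (of size 13) for the amplifier machinery, the base amplifier from Example~\ref{ex:ratio}, and the Minsky-machine simulation; the only counter that must be freshly introduced at each layer is one component of the newly-produced pair $(\vr{c}, \vr{d})$, which has to coexist with the previous layer's $(\vr{c}, \vr{d})$ that is still being consumed by the simulated tests of the current layer's main loop.

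The main obstacle is verifying that this counter reuse preserves correctness. I would refine step~(\ref{e:comp.1}) of the composition operator to permit controlled identification between counters of $\mathcal{A}$ and $\mathcal{P}$ rather than fresh renaming, and then revisit the proof of Proposition \ref{pr:comp.corr} to check that the invariants used there (in particular $\vr{x}_i + \hat{\vr{x}}_i \leq R$ and $\vr{d} \geq \vr{c} \cdot R$) continue to hold when some counter names coincide. This reduces to showing that each shared counter is zero at the precise moment of handover between layers, an invariant that follows from the correctness analysis of $\mathcal{F}$ and the complement-initialization loop. The bookkeeping to propagate these zero-invariants consistently across all $h$ layers of the construction is tedious but routine, which justifies the paper's decision to relegate the detailed verification to the appendix.
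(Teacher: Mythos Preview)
Your counter-reuse strategy is essentially the paper's approach, and your identification of which counters become zero after each layer of $\mathcal{F}$ is on the right track (the paper formalises this via a \textbf{reset} macro and a sequence of optimisations that collapse all the $\vr{b}$'s into one, alternate between two $\vr{c}$'s, and share the non-ratio counters across layers, leaving only the $\vr{d}_j$'s as per-layer survivors).

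However, your reduction has a genuine gap at the very first step. For a \emph{fixed} positive integer~$h$, the number $3!^h$ is a constant. A counter program of size~$n$ whose tested counters are bounded by this constant has configuration space of size $O\!\bigl(n\cdot (3!^h+1)^{O(1)}\bigr)$, which is polynomial in~$n$; hence its halting problem is in~\textsc{P}, not $h$-\expspace-complete. The ``robustness results'' you cite concern the choice of fast-growing hierarchy for the \tower\ class, where the height of the tower is the input size~$n$; they do not let you replace an $n$-dependent bound by a constant. The paper instead reduces from the halting problem for Minsky machines of size~$n$ with $3$ counters bounded by $n!^{h+1}$ (citing \cite[Theorems~3.1 and~4.3]{FischerMR68}), which \emph{is} $h$-\expspace-complete because the bound grows as an $(h{+}1)$-fold exponential in~$n$.

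This forces two further corrections to your plan. First, the amplifier you need is by $n!^{h+1}$, not $3!^h$: the paper builds it from a trivial amplifier by~$n$ (of size $O(n)$, cf.\ Example~\ref{ex:ratio}) followed by $h{+}1$ compositions with~$\mathcal{F}$, rather than $h$ compositions from base~$3$. Second, to keep the final counter count at $h+13$ you must bound the number of counters in the source Minsky machine; the paper uses the classical $3$-counter bound and then reuses six of the amplifier's forced-zero counters for the three simulated counters and their complements. Without fixing the number of Minsky counters, your ``constant working set'' claim for the simulation does not hold.
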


\begin{proof}
We reduce in linear time from the $h$-\expspace-complete halting problem for counter programs of size~$n$ with $3$ counters, which are all tested and bounded by $n!^{h + 1}$ (cf.\ \cite[Theorems 3.1 and 4.3]{FischerMR68}).

The reduction builds on the following refinement of Lemma~\ref{lm:main}, whose proof is given in Appendix~\ref{app:h.expspace}.
\begin{lemma}
\label{lm:ref}
For every $h\geq 0$, an amplifier by $n!^{h+1}$ without tested counters is computable in time $O(n+h)$, such that:
\begin{itemize}
\item it has $h+13$ untested counters, 
\item $h+1$ counters are required to be zero by the terminal {\halt} command,
\item $9$ out of the $12$ counters not appearing in the {\halt} command are zero at termination of every complete run
of the amplifier.
\end{itemize}
\end{lemma}
\noindent
According to the last condition, the nine counters are \emph{forced} to be zero at termination of every complete run, without being tested to be so. 

Given a counter program $\mathcal{M}$ of size~$n$ with $3$ counters, which are all tested and bounded by $n!^{h + 1}$, the reduction builds the composite program $\comp{\mathcal{T}}{\mathcal{M}}$ where the amplifier $\mathcal{T}$ is given by Lemma~\ref{lm:ref}, analogously as in the proof of Theorem~\ref{thm:main}.
In order to keep the number of counters in $\comp{\mathcal{T}}{\mathcal{M}}$ not greater than $h+13$,
we reuse $6$ out of the $9$ counters not appearing in the {\halt} command of $\mathcal{T}$ (and forced to be zero at termination of $\mathcal{T}$) for simulation of the three counters of 
$\mathcal{M}$.
\end{proof}

\section{Concluding Remarks}

We have focussed on presenting clearly the result that the Petri nets reachability problem is not elementary, leaving several arising directions for future consideration.  The latter include investigating implications for the reachability problem for fixed-dimension flat vector addition systems with states (cf.\ \cite{LerouxS04,BlondinFGHM15,EnglertLT16}).

\section*{Acknowledgements}

We thank Alain Finkel for promoting the study of one-dimensional Petri nets with a pushdown stack, and Matthias Englert, Piotr Hofman and Rados{\l}aw Pi{\'o}rkowski for working with us on open questions about those systems.  The latter efforts led us to discovering the non-elementary lower bound presented in this paper.

We are also grateful to Marthe Bonamy, Artur Czumaj, Javier Esparza, Marcin Pilipczuk, Micha{\l} Pilipczuk, Sylvain Schmitz and Philippe Schnoebelen for helpful comments.

\bibliographystyle{plainnat}
\bibliography{tower}

\clearpage
\appendix

\section{Proof of Lemma~\ref{lm:ref}}
\label{app:h.expspace}.
\begin{proof}
Let $\mathcal{A}$ be the program obtained from a trivial amplifier by~$n$ (cf.\ Example~\ref{ex:ratio}):
\begin{quote}
\begin{algorithmic}[1]
\State \add{\vr{b}_0}{n}
  \State \inc{\vr{c}_0} \quad \add{\vr{d}_0}{n}
\Loop
  \State \inc{\vr{c}_0} \quad \add{\vr{d}_0}{n}
\EndLoop
\State \halt.
\end{algorithmic}
\end{quote}
and let $\mathcal{F}$ be the factorrial amplifier from Lemma~\ref{lem:maintech}.
By Proposition~\ref{pr:comp.corr} and Lemma~\ref{lem:maintech}, we know that the composite program:
\[ \mathcal{T} \ = \ \overbrace{\comp{(\comp{(\comp{\mathcal{A}}{\mathcal{F}})}{\mathcal{F}})}{\cdots \mathcal{F}}}^{h+1 \text{ compositions}} \]
is an amplifier by $n!^{h+1}$ without tested counters.
By expanding the $h+1$ composition operations (cf.~Section~\ref{s:elim.zero}) and renaming counters explicitly (the counters of $\mathcal{A}$ are indexed by $0$, and counters of the $j$th program $\mathcal{F}$, for $j = 1, \ldots, h+1$, are indexed by $j$), we obtain the following form of $\mathcal{T}$:
\begin{quote}
\begin{algorithmic}
\State $\mathcal{A}'$ \quad $\mathcal{H}_1$ \quad $\cdots$ \quad $\mathcal{H}_{h + 1}$
\State \haltz{\vr{d}_0, \vr{y}_1, \vr{d}_1, \ldots, \vr{y}_h, \vr{d}_h, \vr{y}_{h+1}}
\end{algorithmic}
\end{quote}
where $\mathcal{A}'$ is the program fragment of $\mathcal{A}$ with the {\halt} command removed, and $\mathcal{H}_j$ is the program fragment defined in Algorithm~\ref{a:Hj} 
(without the boxed commands, which will be used later to simplify the program) making
use of the following macros depending on~$j$.
We omit the \iszero{\vr{i}_j} and \iszero{\vr{i}'_j} macros which are defined analogously to \ismax{\vr{i}_j} (cf.~Section~\ref{s:elim.zero}), and \sub{\vr{c}_j}{\vr{i}_j}, \sub{\vr{d}_j}{\vr{i}_j}, \add{\vr{d}'_j}{\vr{i}_j+1}
and \add{\vr{b}'_j}{\vr{i}_j+1} macros which are defined analogously
to \sub{\vr{x}_j}{\vr{i}_j} (cf.~Section~\ref{subs:simple}).
\begin{description}
\item[\setup{\vr{\hat i}_j, \vr{\hat i}'_j}:] \ (cf.~\eqref{e:comp.2} in Section~\ref{s:elim.zero})
\begin{samepage}
\begin{quote}
\begin{algorithmic}
\Loop
  \State \inc{\vr{\hat i}_j} \quad \inc{\vr{\hat i}'_j} \quad \dec{\vr{b}_{j-1}} \quad \dec{\vr{d}_{j - 1}}
\EndLoop
\State \dec{\vr{c}_{j - 1}}
\end{algorithmic}
\end{quote}
\end{samepage}

\item[\ismax{\vr{i}_j}:] \ (cf.~\eqref{e:comp.5z} and~\eqref{e:comp.5m} in Section~\ref{s:elim.zero})
\begin{quote}
\begin{algorithmic}
\Loop
  \State \dec{\vr{i}_j} \quad \inc{\vr{\hat i}_j} \quad \dec{\vr{d}_{j - 1}}
\EndLoop
\State \dec{\vr{c}_{j - 1}}
\Loop
  \State \inc{\vr{i}_j} \quad \dec{\vr{\hat i}_j} \quad \dec{\vr{d}_{j - 1}}
\EndLoop
\State \dec{\vr{c}_{j - 1}}
\end{algorithmic}
\end{quote}

\item[\sub{\vr{x}_j}{\vr{i}_j}:] \ (cf.~Section~\ref{subs:simple})
\begin{quote}
\begin{algorithmic}
\Loop
  \State \dec{\vr{i}_j} \quad \inc{\vr{\hat i}_j} \quad \inc{\vr{i}'_j} \quad \dec{\vr{x}_j} 
\EndLoop
\State \iszero{\vr{i}_j}
\Loop
  \State  \inc{\vr{i}_j} \quad \dec{\vr{\hat i}_j} \quad \dec{\vr{i}'_j}
\EndLoop
\State \iszero{\vr{i}'_j}
\end{algorithmic}
\end{quote}

\item[\textbf{loop at most} $\vr{b}_j$ \textbf{times} $<${\it body}$>$:] \ (cf.~Section~\ref{s:exp})
\begin{quote}
\begin{algorithmic}
\Loop
  \State \dec{\vr{b}_j} \quad \inc{\vr{b}'_j}
\EndLoop
\Loop
  \State \dec{\vr{b}'_j} \quad \inc{\vr{b}_j}
  \State $<${\it body}$>$
\EndLoop
\end{algorithmic}
\end{quote}
\end{description}

\begin{algorithm}
\caption{Program fragment~$\mathcal{H}_j$ and additional (boxed) commands extending it to $\mathcal{H}^{(1)}_j$.}
\label{a:Hj}
\begin{algorithmic}[1]
\State \setup{\vr{\hat i}_j, \vr{\hat i}'_j}
\State \inc{\vr{i}_j} \quad \dec{\vr{\hat i}_j}  \hfill (cf.~\eqref{e:comp.3} and~\eqref{e:comp.4} in Section~\ref{s:elim.zero})
\State \inc{\vr{b}_j} \quad
	 \inc{\vr{c}_{j}} \quad
       \inc{\vr{d}_j} \quad \inc{\vr{x}_j} \quad
\label{l:insertinc1} 
       \inc{\vr{y}_j}
	\quad \fbox{ \inc{\vr{d}_{j - 1}} }
\Loop
  \State \inc{\vr{c}_{j}} \quad
         \inc{\vr{d}_j} \quad \inc{\vr{x}_j} \quad
\label{l:insertinc2}
         \inc{\vr{y}_j}
         \quad \fbox{ \inc{\vr{d}_{j - 1}} }
\EndLoop
\Loop
  \Loop
    \State \sub{\vr{c}_{j}}{\vr{i}_j} \quad \inc{\vr{c}'_j}
    \Loopatmost{\vr{b}_j}
      \State \sub{\vr{d}_j}{\vr{i}_j} \quad \sub{\vr{x}_j}{\vr{i}_j} \quad
             \add{\vr{d}'_j}{\vr{i}_j+1} 
    \EndLoop
  \EndLoop
  \Loop
    \State \dec{\vr{b}_j} \quad \add{\vr{b}'_j}{\vr{i}_j+1}
  \EndLoop
  \Loop
    \State \dec{\vr{b}'_j} \quad \inc{\vr{b}_j}
  \EndLoop
  \Loop
    \State \dec{\vr{c}'_j} \quad \inc{\vr{c}_{j}}
    \Loopatmost{\vr{b}_j}
      \State \dec{\vr{d}'_j} \quad 
             \inc{\vr{d}_j} \quad \inc{\vr{x}_j}
    \EndLoop
  \EndLoop
  \State \inc{\vr{i}_j} \quad \dec{\vr{\hat i}_j}   \hfill (cf.~\eqref{e:comp.3} and~\eqref{e:comp.4} in Section~\ref{s:elim.zero})
\EndLoop
\State \ismax{\vr{i}_j}
\Loop
  \State \sub{\vr{x}_j}{\vr{i}_j} \quad 
  \dec{\vr{y}_j} \label{l:insertdec}
  \quad \fbox{ \dec{\vr{d}_{j - 1}} }
\EndLoop
\State \fbox{ \reset{\vr{i}_j, \vr{\hat i}'_j} }  \label{l:insertreset}
\end{algorithmic}
\end{algorithm}

We thus have:
\begin{claim}
The program $\mathcal{T}$ is an amplifier by $n!^{h+1}$ without tested counters.
\end{claim}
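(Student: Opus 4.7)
The plan is to prove the claim by a straightforward induction on the number of applications of the composition operator. I would first introduce the shorthand $\mathcal{T}_j$ for the $j$-fold composite, so $\mathcal{T}_0 = \mathcal{A}$ and $\mathcal{T}_{j+1} = \comp{\mathcal{T}_j}{\mathcal{F}}$, and the claim becomes that $\mathcal{T}_{h+1}$ is an amplifier by $n!^{h+1}$ without tested counters. The base case is immediate: by inspection of Example~\ref{ex:ratio}, $\mathcal{A}$ is an amplifier by $n$ containing no tested counters, hence vacuously a $B$-amplifier by $n$ for every positive integer~$B$.

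For the inductive step, the key observation is that the bound $R$ on the outer $\mathcal{F}$'s tested counters in $\comp{\mathcal{T}_j}{\mathcal{F}}$ must match the ratio of the inner amplifier $\mathcal{T}_j$, and Lemma~\ref{lem:maintech} is precisely the flexibility that lets us take this bound to be any positive integer. Assuming $\mathcal{T}_j$ is an amplifier by $R_j := n!^j$ without tested counters, I would instantiate Lemma~\ref{lem:maintech} with $k = R_j$ to conclude that $\mathcal{F}$ is an $R_j$-amplifier by $R_j! = n!^{j+1}$. Proposition~\ref{pr:comp.corr} then applies with $\mathcal{T}_j$ in the amplifier slot and $\mathcal{F}$ in the program slot, and shows that $\mathcal{T}_{j+1}$ computes the same relation in counters $\vr{b},\vr{c},\vr{d}$ as $\mathcal{F}$ does $R_j$-compute, namely the amplifier-by-$n!^{j+1}$ relation. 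Moreover, since $\mathcal{T}_j$ carries no tested counters and the construction of $\comp{\cdot}{\cdot}$ turns the tested counters of the inner program into untested ones (via steps (v)--(vi) of the construction, which replace every \testz{} and \testm{} by loops using the auxiliary counters $\vr{c},\vr{d}$), the composite $\mathcal{T}_{j+1}$ also contains no occurrence of \testz{} or \testm{}, so it has no tested counters. This closes the induction and, setting $j = h+1$, yields the claim.

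There is no genuine obstacle here: the argument is essentially a bookkeeping exercise tracking the $B$ and $R$ parameters of Proposition~\ref{pr:comp.corr} across the iterated composition, with the mild subtlety that the ratio at level $j$ must be used as the tested-counter bound at level $j+1$. The technical content that makes this iteration go through --- namely that $\mathcal{F}$ acts as a $k$-amplifier by $k!$ for \emph{every} positive integer $k$, and that $\comp{\cdot}{\cdot}$ faithfully transforms the bound on tested counters while computing the same relation --- has already been discharged by Lemma~\ref{lem:maintech} and Proposition~\ref{pr:comp.corr} respectively, so the proof of this particular claim reduces to invoking them.
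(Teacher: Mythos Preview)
Your proposal is correct and follows essentially the same approach as the paper: the paper simply states that the claim holds ``by Proposition~\ref{pr:comp.corr} and Lemma~\ref{lem:maintech}'' (mirroring the argument already given for Lemma~\ref{lm:main}), and your write-up is just the explicit induction that unpacks this citation.
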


We call the counters $\vr{b}_j, \vr{c}_j, \vr{d}_j$, for $j = 0, \ldots, h+1$, the \emph{ratio} counters, and the remaining counters \emph{non-ratio} ones.
The following is a straightforward observation from the above explicit construction of 
the amplifier $\mathcal{T}$:
\begin{claim}
For $j = 1, \ldots, h+1$, the program fragment $\mathcal{H}_j$ only modifies the $j$-indexed counters, plus 
the three ratio counters  $\vr{b}_{j-1}, \vr{c}_{j-1}, \vr{d}_{j-1}$.
\end{claim}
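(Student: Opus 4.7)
The plan is to prove the claim by direct inspection of Algorithm~\ref{a:Hj} together with the expansions of the macros it uses, ignoring the boxed commands (which belong to $\mathcal{H}^{(1)}_j$, not to $\mathcal{H}_j$). Since the claim merely asserts that no counter outside of a specified finite set is touched, the argument is syntactic: it suffices to enumerate every \textbf{+=} and \textbf{-=} command that can be executed from within $\mathcal{H}_j$ and check that its operand belongs to the allowed set.

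First, I would enumerate the direct commands appearing in $\mathcal{H}_j$ itself (the initialisation line, the incrementing loop, the two \inc{\vr{i}_j}/\dec{\vr{\hat i}_j} lines, the four nested loops making up the body of the outer loop, and the final loop). A quick scan confirms that each such command names only a counter with subscript $j$, drawn from $\vr{b}_j, \vr{b}'_j, \vr{c}_j, \vr{c}'_j, \vr{d}_j, \vr{d}'_j, \vr{x}_j, \vr{y}_j, \vr{i}_j, \vr{\hat i}_j$.

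Second, I would go through each macro invoked in $\mathcal{H}_j$ and record the counters it can modify, using the explicit definitions given in the excerpt. The macro \setup{\vr{\hat i}_j, \vr{\hat i}'_j} touches only $\vr{\hat i}_j, \vr{\hat i}'_j, \vr{b}_{j-1}, \vr{d}_{j-1}, \vr{c}_{j-1}$; \ismax{\vr{i}_j}, and by analogy \iszero{\vr{i}_j} and \iszero{\vr{i}'_j}, touch only the relevant bounded counter, its complement, and $\vr{d}_{j-1}, \vr{c}_{j-1}$; the arithmetic macros \sub{\vr{x}_j}{\vr{i}_j}, \sub{\vr{c}_j}{\vr{i}_j}, \sub{\vr{d}_j}{\vr{i}_j}, \add{\vr{d}'_j}{\vr{i}_j+1} and \add{\vr{b}'_j}{\vr{i}_j+1} touch only their named $j$-indexed operand together with $\vr{i}_j, \vr{\hat i}_j, \vr{i}'_j$ (the latter coming from the embedded \iszero{\vr{i}_j}/\iszero{\vr{i}'_j} checks); and the \textbf{loop at most} $\vr{b}_j$ \textbf{times} construct modifies only $\vr{b}_j$ and $\vr{b}'_j$ (the body is then handled recursively by the same case analysis).

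Taking the union of all counters produced by the two enumerations above yields exactly the $j$-indexed counters together with $\vr{b}_{j-1}, \vr{c}_{j-1}, \vr{d}_{j-1}$, as required. There is no genuine obstacle here: the only point to keep in mind is that the previous-level ratio counters $\vr{b}_{j-1}, \vr{c}_{j-1}, \vr{d}_{j-1}$ inevitably appear, precisely because $\mathcal{H}_j$ was produced by applying the composition operator of Section~\ref{s:elim.zero} using the $(j-1)$-th amplifier, whose \textsf{setup}, \textsf{ismax} and \textsf{iszero} macros are designed to consume those three counters, and no other non-$j$-indexed counter ever enters the picture.
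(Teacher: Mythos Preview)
Your proposal is correct and follows essentially the same approach as the paper, which simply states that the claim is ``a straightforward observation from the above explicit construction of the amplifier~$\mathcal{T}$'' without further elaboration; you have merely spelled out that syntactic inspection in detail. One small omission: when you list the counters touched by the arithmetic macros such as \sub{\vr{x}_j}{\vr{i}_j}, you should also include $\vr{\hat i}'_j$ (from the embedded \iszero{\vr{i}'_j}) and the $(j{-}1)$-indexed counters $\vr{d}_{j-1},\vr{c}_{j-1}$ coming from those \textbf{iszero} calls, but since all of these lie in the allowed set your conclusion is unaffected.
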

We are going to perform a sequence of optimisations on $\mathcal{T}$
leading to an amplifier that satisfies
the requirements of Lemma~\ref{lm:ref}.

The first optimisation builds on the following fact, which is easily derived from
the analysis performed in the proof of Lemma~\ref{lem:maintech}:
\begin{claim} \label{claim:zeroattermination}
Let $k > 0$. At termination of every complete $k$-run of $\mathcal{F}$, the counters
$\vr{x}, \vr{d}', \vr{c}', \vr{b'}, \vr{i}'$ are all zero, and the counter $\vr{i}$ equals $k$.
\end{claim}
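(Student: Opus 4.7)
The plan is to read off each of the claimed terminal equalities from the detailed analysis already carried out in the proof of Lemma~\ref{lem:maintech}, combined with the observation that the post-main-loop code of $\mathcal{F}$ (the \testm{\vr{i}}, the final loop at line~\ref{l:f.l}, and \haltz{\vr{y}}) touches very few counters.

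First, I would invoke Claims~\ref{cl:3} and~\ref{cl:4} from that proof, which show that at the exit of iteration $k-1$ of the main loop (i.e.\ for $\vr{i}$ ranging up to $k-1$), we have $\vr{d}'_{k-1} = \vr{c}'_{k-1} = \vr{b}'_{k-1} = 0$. Since none of the counters $\vr{d}'$, $\vr{c}'$, $\vr{b}'$ is read or written by the \testm{\vr{i}} test, by the final loop (which only modifies $\vr{x}$ and $\vr{y}$), or by \haltz{\vr{y}}, these three counters are still zero at termination.

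Next, the \testm{\vr{i}} command immediately after the main loop succeeds only if $\vr{i}=k$, and $\vr{i}$ is not modified by the subsequent code (the macro \sub{\vr{x}}{\vr{i}} leaves $\vr{i}$ unchanged, as it transfers $\vr{i}$ to $\vr{i}'$ and then back). Hence $\vr{i}=k$ at termination.

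For $\vr{x}$, I would recall from the proof of Lemma~\ref{lem:maintech} that at the start of the final loop one has $\vr{x}=\vr{d}=a\cdot k$ and $\vr{y}=a$, where $a$ is the common initial value of $\vr{c},\vr{d},\vr{x},\vr{y}$. The terminal constraint \haltz{\vr{y}} forces the final loop to run long enough to reduce $\vr{y}$ to zero; since each iteration decrements $\vr{y}$ by $1$, it runs exactly $a$ times, and since each iteration also decrements $\vr{x}$ by $\vr{i}=k$, the counter $\vr{x}$ is reduced to $a\cdot k - a\cdot k = 0$.

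Finally, for $\vr{i}'$: this counter appears only inside the macros \sub{\cdot}{\vr{i}} and \add{\cdot}{\vr{i}+1} of Section~\ref{subs:simple}, each of which enters with $\vr{i}'=0$ and exits via a \testz{\vr{i}'} guard that blocks the run unless $\vr{i}'=0$ again. Hence $\vr{i}'=0$ between macro invocations, in particular at the \haltz{\vr{y}} command. The only step I expect to require care is the hand-off between successive macro calls (checking that $\vr{i}'$ really does come back to zero in every complete run, rather than merely being zero in the intended complete run), but this is forced by the explicit \testz{\vr{i}'} at the end of each macro.
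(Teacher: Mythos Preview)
Your proposal is correct and follows exactly the route the paper indicates: the paper does not spell out a proof of this claim but simply states that it ``is easily derived from the analysis performed in the proof of Lemma~\ref{lem:maintech}'', and you have carried out precisely that derivation, reading off $\vr{d}'_{k-1}=\vr{c}'_{k-1}=\vr{b}'_{k-1}=0$ from Claims~\ref{cl:3} and~\ref{cl:4}, using \testm{\vr{i}} for $\vr{i}=k$, the final-loop bookkeeping for $\vr{x}=0$, and the \testz{\vr{i}'} guards inside the macros for $\vr{i}'=0$.
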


Consider a complete run of $\mathcal{T}$ and fix $j \in \{1, \ldots, h+1\}$. To analyse the program fragment~$\mathcal{H}_j$, it is convenient to denote by $k$ the value of $\vr{b}_{j-1}$ at the start of $\mathcal{H}_j$. 
By Claim~\ref{claim:zeroattermination}, at the end of the program fragment $\mathcal{H}_j$, not only the 
counter $\vr{y}_j$ (which is later checked to be zero by the {\halt} command) but actually
all $j$-indexed non-ratio counters are zero, except for the two counters $\vr{i}_j, \vr{\hat i}'_j$.
It is readily verified that these two counters will be equal to $k$ 
due to the invariants we keep in the program: $\vr{i} + \vr{\hat i} = k$ and $\vr{i}' + \vr{\hat i}' = k$.
We enforce the counters $\vr{i}_j, \vr{\hat i}'_j$ to be zero at the end of $\mathcal{H}_j$, by adjoining at the end of 
$\mathcal{H}_j$ a macro defined by the following
piece of code (depicted as a boxed command in line~\ref{l:insertreset} Algorithm~\ref{a:Hj}):
\begin{description}
\begin{samepage}
\item[\reset{\vr{i}_j, \vr{\hat i}'_j}:] \
\begin{quote}
\begin{algorithmic}
\Loop
  \State \dec{\vr{i}_j} \quad \dec{\vr{\hat i}'_j} \quad \dec{\vr{d}_{j - 1}}
\EndLoop
\State \dec{\vr{c}_{j - 1}}
\end{algorithmic}
\end{quote}
\end{samepage}
\end{description}
Denoting by $\mathcal{H}^{(0)}_j$ the so modified program fragments, and by $\mathcal{T}^{(0)}$ the corresponding program
\begin{quote}
\begin{algorithmic}
\State $\mathcal{A}'$ \quad $\mathcal{H}^{(0)}_1$ \quad $\cdots$ \quad $\mathcal{H}^{(0)}_{h + 1}$
\State \haltz{\vr{d}_0, \vr{y}_1, \vr{d}_1, \ldots, \vr{y}_h, \vr{d}_h, \vr{y}_{h+1}},
\end{algorithmic}
\end{quote}
we summarise:
\begin{claim} \label{claim:nonratiozero}
For $j = 1, \ldots, h+1$, at the end of the program fragment $\mathcal{H}^{(0)}_j$ in every complete run of 
$\mathcal{T}^{(0)}$, all $j$-indexed non-ratio counters are zero.
\end{claim}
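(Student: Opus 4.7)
The plan is to fix $j \in \{1, \ldots, h+1\}$, let $k$ denote the value of $\vr{b}_{j-1}$ at the start of $\mathcal{H}^{(0)}_j$ (which in a complete run is determined by the preceding fragments), and argue in two stages. First I would show that at the end of the sub-fragment $\mathcal{H}_j$ (i.e.\ before the added \reset macro), all $j$-indexed non-ratio counters except possibly $\vr{i}_j$ and $\vr{\hat i}'_j$ are already zero, and that these two exceptions both equal $k$. Second I would show that the \reset macro forces them to zero too.

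For the first stage, I would appeal to the fact that $\mathcal{H}_j$ is, by construction, precisely the expansion of the composition operator applied to $\mathcal{F}$ relative to the ratio counters $\vr{b}_{j-1}, \vr{c}_{j-1}, \vr{d}_{j-1}$ (which on entry satisfy $\vr{b}_{j-1} = k$ and, by induction on $j$ using Proposition~\ref{pr:comp.corr} and Lemma~\ref{lem:maintech}, $\vr{d}_{j-1} = \vr{c}_{j-1} \cdot k$). By the backward argument in the proof of Proposition~\ref{pr:comp.corr}, the portion of the complete run of $\mathcal{T}^{(0)}$ executing $\mathcal{H}_j$ corresponds to a complete $k$-run of $\mathcal{F}$. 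Applying Claim~\ref{claim:zeroattermination} to that $k$-run of $\mathcal{F}$ gives $\vr{x}_j = \vr{d}'_j = \vr{c}'_j = \vr{b}'_j = \vr{i}'_j = 0$ and $\vr{i}_j = k$, while $\vr{y}_j = 0$ because $\vr{y}$ appears in the terminal \haltz of $\mathcal{F}$ and hence is inherited by the composition. The invariants $\vr{i}_j + \vr{\hat i}_j = k$ and $\vr{i}'_j + \vr{\hat i}'_j = k$, which hold from the end of the \setup{\vr{\hat i}_j, \vr{\hat i}'_j} macro onwards (a straightforward induction following steps (\ref{e:comp.3})--(\ref{e:comp.5m}) of the composition, where each command maintains the relevant sum), then yield $\vr{\hat i}_j = 0$ and $\vr{\hat i}'_j = k$.

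For the second stage, the \reset{\vr{i}_j, \vr{\hat i}'_j} macro is an ``at most $k$'' loop that decrements $\vr{i}_j$, $\vr{\hat i}'_j$, and $\vr{d}_{j-1}$ in lockstep, followed by a single \dec{\vr{c}_{j-1}}. Since $\vr{i}_j$ and $\vr{\hat i}'_j$ both hold value $k$ on entry, the loop runs at most $k$ times; if it ran strictly fewer, then the equality $\vr{d}_{j-1} = k \cdot \vr{c}_{j-1}$ would turn into the strict inequality $\vr{d}_{j-1} > k \cdot \vr{c}_{j-1}$ (as in the forward/backward reasoning of Example~\ref{ex:arguments} and Proposition~\ref{pr:comp.corr}), which persists to termination and so prevents $\vr{d}_{j-1}$ from reaching zero, contradicting the fact that $\vr{d}_{j-1}$ is listed in the final \haltz of $\mathcal{T}^{(0)}$. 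Hence the loop iterates exactly $k$ times and both $\vr{i}_j$ and $\vr{\hat i}'_j$ become zero.

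The main obstacle is making the inductive step from ``Proposition~\ref{pr:comp.corr} applied to whole programs'' to ``the piece $\mathcal{H}_j$ behaves as a complete $k$-run of $\mathcal{F}$'' fully rigorous, i.e.\ verifying that on entry to $\mathcal{H}^{(0)}_j$ one indeed has $\vr{d}_{j-1} = k \cdot \vr{c}_{j-1}$ and that this exact equality is the one forced by the terminal zero test on $\vr{d}_0$ chained through the telescoping ratio invariants along $\mathcal{H}^{(0)}_1, \ldots, \mathcal{H}^{(0)}_{h+1}$. This is a routine induction on $j$ built on Proposition~\ref{pr:comp.corr} and Lemma~\ref{lem:maintech}, but it is the only bookkeeping that requires genuine care.
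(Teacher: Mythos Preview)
Your proposal is correct and follows essentially the same approach as the paper: the paper's argument (given in the paragraphs immediately preceding the claim) also fixes $j$, lets $k$ be the entry value of $\vr{b}_{j-1}$, invokes Claim~\ref{claim:zeroattermination} to conclude that all $j$-indexed non-ratio counters except $\vr{i}_j$ and $\vr{\hat i}'_j$ are zero at the end of $\mathcal{H}_j$, uses the invariants $\vr{i}_j + \vr{\hat i}_j = k$ and $\vr{i}'_j + \vr{\hat i}'_j = k$ to conclude those two equal $k$, and then relies on the \textbf{reset} macro to zero them. Your write-up is in fact more explicit than the paper's on two points that the paper leaves implicit: why the portion of a complete run of $\mathcal{T}^{(0)}$ through $\mathcal{H}_j$ really does correspond to a complete $k$-run of $\mathcal{F}$ (you correctly trace this to the backward argument of Proposition~\ref{pr:comp.corr} and the chained ratio invariant), and why the \textbf{reset} loop must iterate exactly $k$ times (you correctly note that a shortfall would make $\vr{d}_{j-1} > k \cdot \vr{c}_{j-1}$ persist, contradicting the terminal zero test on $\vr{d}_{j-1}$).
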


In the next optimisation, we reduce the number of counters that are required to be zero by the terminal command
{\halt} from
$2h+2$ to $h+1$. 
Consider the following further modification of $\mathcal{H}^{(0)}_j$ depicted by boxed commands in 
lines~\ref{l:insertinc1}, \ref{l:insertinc2} and~\ref{l:insertdec} Algorithm~\ref{a:Hj}:
\begin{enumerate}
\item\label{en:1} insert the instruction \inc{\vr{d}_{j-1}} in lines~\ref{l:insertinc1} and~\ref{l:insertinc2}
(which results in additional increasing the value of $\vr{d}_{j-1}$ by the value ultimately achieved by $\vr{y}_j$), and
\item\label{en:2} insert the instruction \dec{\vr{d}_{j-1}} in line~\ref{l:insertdec} (since it is simultaneously decreased with $\vr{y}_j$ it results in decreasing the value of
$\vr{d}_{j-1}$ by at most the value of the additional increase).
\end{enumerate}
We denote the resulting program fragment by $\mathcal{H}^{(1)}_j$.

Observe that in complete runs of $\mathcal{T}$ the decrease of $\vr{d}_{j-1}$ in~\eqref{en:2} being equal to the increase in~\eqref{en:1},
is equivalent to $\vr{y}_j$ being zero at the end of $\mathcal{H}^{(1)}_j$.
First, let us shortly comment that this does not harm the correctness of zero tests.
Indeed, it suffices to replace the invariant $\vr{d}_{j-1} \ge \vr{c}_{j-1} \cdot R$ used in the proof of Proposition~\ref{pr:comp.corr} with the invariant $\vr{d}_{j-1} \ge \vr{c}_{j-1} \cdot R + \vr{y}_{j-1}$.
The point of this construction is to show that we can remove all $\vr{y}_j$ from the final {\halt} command.
Indeed, using the new invariant $\vr{d}_{j-1} \ge \vr{c}_{j-1} \cdot R + \vr{y}_{j-1}$ we have $\vr{d}_{j-1} \ge \vr{y}_{j-1}$.
Therefore, the following program
$\mathcal{T}^{(1)}$ that requires only counters $\vr{d}_0$, \ldots, $\vr{d}_h$ to be zero at termination:
\begin{samepage}
\begin{quote}
\begin{algorithmic}
\State $\mathcal{A}'$ \quad $\mathcal{H}^{(1)}_1$ \quad $\cdots$ \quad $\mathcal{H}^{(1)}_{h + 1}$
\State \haltz{\vr{d}_0, \vr{d}_1, \ldots, \vr{d}_h}
\end{algorithmic}
\end{quote}
\end{samepage}
is an amplifier by $n!^{h+1}$, just as well as $\mathcal{T}$.
In other words, for every $j \in \{1, \ldots, h+1\}$, 
the counter $\vr{y}_j$ is \emph{forced} to be zero at the end of $\mathcal{H}^{(1)}_j$ in every complete run
of $\mathcal{T}^{(1)}$.  We have thus obtained:
\begin{claim}
The program $\mathcal{T}^{(1)}$ is an amplifier by $n!^{h+1}$ without tested counters.
\end{claim}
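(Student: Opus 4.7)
The strategy is to reduce this statement to the earlier fact that $\mathcal{T}^{(0)}$ is an amplifier by $n!^{h+1}$ without tested counters, by showing that $\mathcal{T}^{(1)}$ computes the same relation in the output counters $\vr{b}_{h+1}, \vr{c}_{h+1}, \vr{d}_{h+1}$. The absence of tested counters in $\mathcal{T}^{(1)}$ is immediate, since the boxed commands and the shortened halt-zero clause introduce no tests, so only the two inclusions between relations need separate treatment.

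For the forward inclusion I take any complete run of $\mathcal{T}^{(0)}$ and lift it to a complete run of $\mathcal{T}^{(1)}$, by executing each inserted \inc{\vr{d}_{j-1}} (respectively \dec{\vr{d}_{j-1}}) in lockstep with the neighbouring \inc{\vr{y}_j} (respectively \dec{\vr{y}_j}). Since the halt-zero clause of $\mathcal{T}^{(0)}$ forces $\vr{y}_j = 0$ at the end of $\mathcal{H}^{(0)}_j$, the added increments and decrements of $\vr{d}_{j-1}$ balance exactly, so $\vr{d}_{j-1}$ reaches the same terminal value~$0$ demanded by the halt-zero clause of $\mathcal{T}^{(1)}$, and the output triple in $\vr{b}_{h+1}, \vr{c}_{h+1}, \vr{d}_{h+1}$ is unchanged.

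For the backward inclusion, fix any complete run of $\mathcal{T}^{(1)}$ and any $j \in \{1, \ldots, h+1\}$. I would strengthen the invariant of Proposition~\ref{pr:comp.corr} used inside $\mathcal{H}^{(0)}_j$ from $\vr{d}_{j-1} \geq \vr{c}_{j-1} \cdot R_{j-1}$ to
\[
\vr{d}_{j-1} \;\geq\; \vr{c}_{j-1} \cdot R_{j-1} + \vr{y}_j,
\]
where $R_{j-1} = n!^{j-1}$ is the bound on the $j$-indexed simulated counters. This holds at entry to $\mathcal{H}^{(1)}_j$ (with $\vr{y}_j = 0$ and $\vr{d}_{j-1} = \vr{c}_{j-1} \cdot R_{j-1}$ by an outer induction on~$j$ using the preceding blocks) and is preserved stepwise: a paired inc/dec of $\vr{d}_{j-1}$ and $\vr{y}_j$ moves both sides equally, while every other command that decreases $\vr{d}_{j-1}$ or $\vr{c}_{j-1}$ reduces the right-hand side by at least as much, exactly as in the original proof. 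Because $\vr{d}_{j-1}$ is untouched after $\mathcal{H}^{(1)}_j$ and must be zero at the halt, the invariant forces both $\vr{c}_{j-1} = 0$ and $\vr{y}_j = 0$ at the end of $\mathcal{H}^{(1)}_j$; the former is precisely the hypothesis of the original soundness argument of Proposition~\ref{pr:comp.corr}, which then yields that every simulated zero or max test iterated its two inner loops exactly $R_{j-1}$ times and that the triple $\vr{b}_j, \vr{c}_j, \vr{d}_j$ satisfies the $\mathcal{F}$-amplifier relation at bound $R_{j-1}$ by Lemma~\ref{lem:maintech}. Iterating this up to $j = h+1$ places the final triple in the amplifier-by-$n!^{h+1}$ relation.

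The main obstacle I anticipate is routing the strengthened invariant cleanly through the nested macro expansions, since $\vr{d}_{j-1}$ is touched not only by the explicit boxed commands but also from deep inside \setup{\vr{\hat i}_j, \vr{\hat i}'_j}, \ismax{\vr{i}_j}, \iszero{\vr{i}_j}, and the newly added \reset{\vr{i}_j, \vr{\hat i}'_j}. The cleanest way around this is to quote Proposition~\ref{pr:comp.corr} as a black box: every command of $\mathcal{H}^{(0)}_j$ that is already known to preserve $\vr{d}_{j-1} \geq \vr{c}_{j-1} \cdot R_{j-1}$ and that leaves $\vr{y}_j$ untouched automatically preserves the strengthened invariant, so the only new verification concerns the boxed commands, which is immediate by inspection.
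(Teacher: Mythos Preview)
Your proposal is correct and follows essentially the same route as the paper: both directions are handled exactly as you describe, and the key device in the backward direction is the strengthened invariant $\vr{d}_{j-1} \geq \vr{c}_{j-1}\cdot R_{j-1} + \vr{y}_j$, which is precisely what the paper invokes (the paper writes $\vr{y}_{j-1}$, but your indexing $\vr{y}_j$ is the one consistent with the boxed commands in $\mathcal{H}^{(1)}_j$). Your treatment is in fact more explicit than the paper's, which compresses the whole argument into two sentences; your closing remark about quoting Proposition~\ref{pr:comp.corr} as a black box for the non-boxed commands is exactly the right way to keep the verification short.
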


Since Claim~\ref{claim:nonratiozero} is still true for the program $\mathcal{T}^{(1)}$, we 
optimise further by collapsing all the counters $\vr{x}_1$, \ldots, $\vr{x}_{h+1}$ into one counter $\vr{x}$,
and analogously for all other non-ratio counters. 
In particular, since the counters $\vr{y}_1$, \ldots, $\vr{y}_{h+1}$ do not appear anymore in the terminal {\halt} command
of $\mathcal{T}^{(1)}$, we replace these counters with one counter $\vr{y}$. 
We thus obtain new program fragments 
$\mathcal{H}^{(2)}_j$, defined in Algorithm~\ref{a:H2j}, each of them using the same 9 non-ratio counters 
$\vr{i}, \vr{\hat i}, \vr{i}', \vr{\hat i}', \vr{b}', \vr{c}', \vr{d}', \vr{x}, \vr{y}$, plus six ratio counters 
$\vr{b}_{j-1}, \vr{c}_{j-1}, \vr{d}_{j-1}, \vr{b}_j, \vr{c}_j, \vr{d}_j$ (the macros used in $\mathcal{H}^{(2)}_j$ are adjusted accordingly).
The optimisation results in a new program $\mathcal{T}^{(2)}$:
\begin{quote}
\begin{algorithmic}
\State $\mathcal{A}'$ \quad $\mathcal{H}^{(2)}_1$ \quad $\cdots$ \quad $\mathcal{H}^{(2)}_{h + 1}$
\State \haltz{\vr{d}_0, \vr{d}_1, \ldots, \vr{d}_h}
\end{algorithmic}
\end{quote}
satisfying the following claim:
\begin{claim}
The program $\mathcal{T}^{(2)}$ is an amplifier by $n!^{h+1}$ without tested counters such that:
\begin{itemize}
\item it has $3(h+2)+9 = 3h + 15$ untested counters, 
\item $h+1$ counters are required to be zero by the terminal {\halt} command,
\item all counters except for $\vr{b}_{h+1}, \vr{c}_{h+1}, \vr{d}_{h+1}$ are zero at termination of every complete run.
\end{itemize}
\end{claim}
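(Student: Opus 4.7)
The plan is to deduce the three bulleted properties as consequences of the observations already made for $\mathcal{T}^{(1)}$, with the only substantive step being to justify that collapsing the per-block non-ratio counters into a single shared set of nine counters preserves the set of complete runs when projected onto the ratio counters. First I would extend Claim~\ref{claim:nonratiozero} from $\mathcal{T}^{(0)}$ to $\mathcal{T}^{(1)}$: the boxed modifications in Algorithm~\ref{a:Hj} that distinguish $\mathcal{H}^{(1)}_j$ from $\mathcal{H}^{(0)}_j$ touch only the ratio counter $\vr{d}_{j-1}$, so every $j$-indexed non-ratio counter is still forced to value zero at the end of $\mathcal{H}^{(1)}_j$. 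Consequently, at each boundary between $\mathcal{H}^{(1)}_j$ and $\mathcal{H}^{(1)}_{j+1}$ all nine $j$-indexed non-ratio counters vanish, which matches the zero initial condition that $\mathcal{H}^{(1)}_{j+1}$ requires on its own nine non-ratio counters. This yields a natural bijection between complete runs of $\mathcal{T}^{(1)}$ and of $\mathcal{T}^{(2)}$ that preserves every ratio counter value, hence transferring the amplifier-by-$n!^{h+1}$ property.

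The cardinality and terminal-command bullets are then direct counts: the ratio counters $\vr{b}_j, \vr{c}_j, \vr{d}_j$ for $j = 0, \ldots, h+1$ contribute $3(h+2)$ untested counters, the nine shared non-ratio counters contribute $9$, and the \halt command of $\mathcal{T}^{(2)}$ inherits the list $\vr{d}_0, \ldots, \vr{d}_h$ of length $h+1$ from $\mathcal{T}^{(1)}$. For the third bullet I would argue counter by counter that every counter other than $\vr{b}_{h+1}, \vr{c}_{h+1}, \vr{d}_{h+1}$ is zero at termination of any complete run: the nine non-ratio counters vanish by the extended Claim~\ref{claim:nonratiozero} applied to $\mathcal{H}^{(2)}_{h+1}$; each $\vr{d}_j$ for $j \le h$ is zero by the \halt requirement; each $\vr{b}_j$ for $j \le h$ is drained to zero by the \setup{\vr{\hat i}_{j+1}, \vr{\hat i}'_{j+1}} prefix of $\mathcal{H}^{(2)}_{j+1}$, whose first loop can only terminate with $\vr{b}_j = 0$; and each $\vr{c}_j$ for $j \le h$ is forced to zero because the invariant $\vr{d}_j \ge \vr{c}_j \cdot R_j$ from Proposition~\ref{pr:comp.corr}, once adjusted to absorb the extra $\vr{d}_{j-1}$ increments introduced in $\mathcal{H}^{(1)}_j$, collapses to equality once $\vr{d}_j = 0$.

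The main obstacle I anticipate is the first step: one must verify that reusing the nine shared non-ratio counters across blocks cannot enable a complete run of $\mathcal{T}^{(2)}$ that has no counterpart in $\mathcal{T}^{(1)}$. The concern is that a residual value left in some shared counter by $\mathcal{H}^{(2)}_j$ could be exploited by $\mathcal{H}^{(2)}_{j+1}$ along a trajectory that was impossible in $\mathcal{T}^{(1)}$. The defence is that each $\mathcal{H}^{(1)}_j$ is internally self-contained on its non-ratio counters: the inserted \reset{\vr{i}_j, \vr{\hat i}'_j} macro, combined with Claim~\ref{claim:zeroattermination}, guarantees that at the end of $\mathcal{H}^{(1)}_j$ every $j$-indexed non-ratio counter is already zero. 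Since this holds in $\mathcal{T}^{(1)}$ and, mutatis mutandis, in $\mathcal{T}^{(2)}$, no residual value can arise and the bijection between complete runs stands.
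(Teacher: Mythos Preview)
Your overall plan matches the paper's (very terse) argument: the paper justifies the collapse simply by invoking Claim~\ref{claim:nonratiozero} for $\mathcal{T}^{(1)}$, and for the third bullet argues that $\vr{d}_{j-1} = 0$ at termination forces $\vr{c}_{j-1} = 0$ via the invariant, with $\vr{b}_{j-1} = 0$ following from the earlier analysis. Your explicit discussion of the bijection obstacle is in fact more careful than the paper, which leaves that step implicit.

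One step is incorrect as written. You say the \textbf{setup} loop at the head of $\mathcal{H}^{(2)}_{j+1}$ ``can only terminate with $\vr{b}_j = 0$'', but this is a plain nondeterministic loop with no test: it may exit after any iteration, leaving $\vr{b}_j$ positive. The reason $\vr{b}_j$ ends up zero in every complete run is global, not local: by the backward argument in the proof of Proposition~\ref{pr:comp.corr}, the terminal condition $\vr{d}_j = 0$ forces $\vr{d}_j = \vr{c}_j \cdot R$ to hold throughout block $j{+}1$, and hence the \textbf{setup} loop must have iterated exactly $R$ times, draining $\vr{b}_j$. The paper points to the same fact by citing ``the analysis in the proof of Lemma~\ref{lem:maintech}''. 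This is not merely cosmetic: the entire construction rests on untested counters not being locally zero-testable, so no loop of this kind can enforce a termination condition on its own.
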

\noindent
Indeed, concerning the last condition, at the end of the program fragment $\mathcal{H}^{(2)}_j$
in a complete run, 
the counter $\vr{d}_{j-1}$ is necessarily zero (as at termination it is so), and hence also $\vr{c}_{j-1}$ is necessarily zero. 
Furthermore, the analysis in the proof of Lemma~\ref{lem:maintech} reveals that $\vr{b}_{j-1}$ is zero too.
In consequence, all the counters
$\vr{b}_0$, \ldots, $\vr{b}_h$, $\vr{c}_0$, \ldots, $\vr{c}_h$ and $\vr{d}_0$, \ldots, $\vr{d}_h$ are necessarily zero at the termination 
of every complete run.

\begin{algorithm}
\caption{Program fragment~$\mathcal{H}^{(2)}_j$.}
\label{a:H2j}
\begin{algorithmic}[1]
\State \setup{\vr{\hat i}, \vr{\hat i}'} \label{lH2:1}
\State \inc{\vr{i}} \quad \dec{\vr{\hat i}}
\State \inc{\vr{b}_j} \quad
	 \inc{\vr{c}_{j}} \quad
       \inc{\vr{d}_j} \quad \inc{\vr{x}} \quad
       \inc{\vr{y}}
	\quad \inc{\vr{d}_{j - 1}} 
\Loop
  \State \inc{\vr{c}_{j}} \quad
         \inc{\vr{d}_j} \quad \inc{\vr{x}} \quad
         \inc{\vr{y}}
         \quad \inc{\vr{d}_{j - 1}} 
\EndLoop
\Loop
  \Loop
    \State \sub{\vr{c}_{j}}{\vr{i}} \quad \inc{\vr{c}'}
    \Loopatmost{\vr{b}_j}
      \State \sub{\vr{d}_j}{\vr{i}} \quad \sub{\vr{x}}{\vr{i}} \quad
             \add{\vr{d}'}{\vr{i}+1} 
    \EndLoop
  \EndLoop
  \Loop
    \State \dec{\vr{b}_j} \quad \add{\vr{b}'}{\vr{i}+1}
  \EndLoop
  \Loop
    \State \dec{\vr{b}'} \quad \inc{\vr{b}_j}
  \EndLoop
  \Loop
    \State \dec{\vr{c}'} \quad \inc{\vr{c}_{j}}
    \Loopatmost{\vr{b}_j}
      \State \dec{\vr{d}'} \quad
             \inc{\vr{d}_j} \quad \inc{\vr{x}}
    \EndLoop
  \EndLoop
  \State \inc{\vr{i}} \quad \dec{\vr{\hat i}}
\EndLoop
\State \ismax{\vr{i}}
\Loop
  \State \sub{\vr{x}}{\vr{i}} \quad 
  \dec{\vr{y}}
  \quad \dec{\vr{d}_{j - 1}} 
\EndLoop
\State \reset{\vr{i}, \vr{\hat i}'} 
\end{algorithmic}
\end{algorithm}

Using the latter observation we perform further (final) optimisations.

First, as every $c_j$ is only used in $\mathcal{H}^{(2)}_j$ and $\mathcal{H}^{(2)}_{j+1}$, and is 
zero at termination
(and thus also at the end of $\mathcal{H}^{(2)}_{j+1}$) in every complete run, instead of $h+2$ counters 
$\vr{c}_0$, \ldots, $\vr{c}_{h+1}$ we can use in an alternating manner just two, denoted $\vr{c}_0$ and $\vr{c}_1$.

Second, as every $\vr{b}_{j-1}$ is zero at termination in every complete run, and not modified after line~\ref{lH2:1} in $\mathcal{H}^{(2)}_j$, it becomes zero before the next counter $\vr{b}_j$ is modified by
$\mathcal{H}^{(2)}_j$.
Therefore, all $h+2$ counters $\vr{b}_0$, \ldots, $\vr{b}_{h+1}$ can be collapsed to just one counter $\vr{b}$.

Applying these two optimisations, 
together with a further improvement in macros that eliminate one more counter 
(to be expanded below),
yields our final version of the program fragment $\mathcal{H}^{(3)}_j$ as defined in Algorithm~\ref{a:H3j}.
The \textbf{loop at most} $\vr{b}$ \textbf{times} macro is exactly the same as in Section~\ref{s:exp},
and the \ismax{\vr{i}} macro appearing in $\mathcal{H}^{(3)}_j$ is as follows:
\begin{description}
\item[\ismax{\vr{i}}:] \ 
\begin{quote}
\begin{algorithmic}
\Loop
  \State \dec{\vr{i}} \quad \inc{\vr{i}'} \quad \dec{\vr{d}_{j - 1}}
\EndLoop
\State \dec{\vr{c}_{j - 1 \bmod 2}}
\Loop
  \State \inc{\vr{i}} \quad \dec{\vr{i}'} \quad \dec{\vr{d}_{j - 1}}
\EndLoop
\State \dec{\vr{c}_{j - 1 \bmod 2}}
\end{algorithmic}
\end{quote}
\end{description}
We implement the \sub{\vr{x}}{\vr{i}} macro succinctly, eliminating the counter $\vr{\hat i}'$,
 as follows:
\begin{description}
\begin{samepage}
\item[\sub{\vr{x}}{\vr{i}}:] \
\begin{quote}
\begin{algorithmic}
\Loop
  \State \dec{\vr{i}} \quad \inc{\vr{i}'} \quad \dec{\vr{x}} \quad \dec{\vr{d}_{j - 1}}
\EndLoop
\Loop
  \State \dec{\vr{\hat i}} \quad \inc{\vr{i}} \quad \dec{\vr{d}_{j - 1}}
\EndLoop
\State \dec{\vr{c}_{j - 1 \bmod 2}}
\Loop
  \State \dec{\vr{i}} \quad \inc{\vr{\hat i}} \quad \dec{\vr{d}_{j - 1}}
\EndLoop
\Loop
  \State \dec{\vr{i'}} \quad \inc{\vr{i}} \quad \dec{\vr{d}_{j - 1}}
\EndLoop
\State \dec{\vr{c}_{j - 1 \bmod 2}}
\end{algorithmic}
\end{quote}
\end{samepage}
\end{description}
\noindent
Analogously, we optimise the other macros 
\sub{\vr{c_0}}{\vr{i}}, \sub{\vr{c_1}}{\vr{i}}, \sub{\vr{d}_j}{\vr{i}}, \add{\vr{d}'}{\vr{i}+1} and \add{\vr{b}'}{\vr{i}+1}
featuring in $\mathcal{H}^{(3)}_j$.

Finally, we deliberately remove $\vr{\hat i}'$ from \setup{\vr{\hat i}} and \reset{\vr{i}} macros, respectively:
\begin{description}
\item[\setup{\vr{\hat i}}:] \  (note that the \inc{\vr{\hat i}'} command is not present)
\begin{quote}
\begin{algorithmic}
\Loop
  \State \inc{\vr{\hat i}} 
  \quad \dec{\vr{b}} \quad \dec{\vr{d}_{j - 1}}
\EndLoop
\State \dec{\vr{c}_{j - 1 \bmod 2}}
\end{algorithmic}
\end{quote}

\item[\reset{\vr{i}}:] \  (note that the \dec{\vr{\hat i}'} command is not present)
\begin{quote}
\begin{algorithmic}
\Loop
  \State \dec{\vr{i}} \quad \dec{\vr{d}_{j - 1}}
\EndLoop
\State \dec{\vr{c}_{j - 1 \bmod 2}}
\end{algorithmic}
\end{quote}
\end{description}

\begin{algorithm}[ht]
\caption{Program fragment~$\mathcal{H}^{(3)}_j$.}
\label{a:H3j}
\begin{algorithmic}[1]
\State \setup{\vr{\hat i}}
\State \inc{\vr{i}} \quad \dec{\vr{\hat i}}
\State \inc{\vr{b}} \quad
	 \inc{\vr{c}_{j \bmod 2}} \quad
       \inc{\vr{d}_j} \quad \inc{\vr{x}} \quad 
       \inc{\vr{y}}
	\quad \inc{\vr{d}_{j - 1}} 
\Loop
  \State \inc{\vr{c}_{j \bmod 2}} \quad
         \inc{\vr{d}_j} \quad \inc{\vr{x}} \quad
         \inc{\vr{y}}
         \quad \inc{\vr{d}_{j - 1}} 
\EndLoop
\Loop
  \Loop
    \State \sub{\vr{c_{j \bmod 2}}}{\vr{i}} \quad \inc{\vr{c}'}
    \Loopatmost{\vr{b}}
      \State \sub{\vr{d}_j}{\vr{i}} \quad \sub{\vr{x}}{\vr{i}} \quad
             \add{\vr{d}'}{\vr{i}+1} 
    \EndLoop
  \EndLoop
  \Loop
    \State \dec{\vr{b}} \quad \add{\vr{b}'}{\vr{i}+1}
  \EndLoop
  \Loop
    \State \dec{\vr{b}'} \quad \inc{\vr{b}}
  \EndLoop
  \Loop
    \State \dec{\vr{c}'} \quad \inc{\vr{c}_{j \bmod 2}}
    \Loopatmost{\vr{b}}
      \State \dec{\vr{d}'} \quad
             \inc{\vr{d}_j} \quad \inc{\vr{x}}
    \EndLoop
  \EndLoop
  \State \inc{\vr{i}} \quad \dec{\vr{\hat i}}
\EndLoop
\State \ismax{\vr{i}}
\Loop
  \State \sub{\vr{x}}{\vr{i}} \quad 
  \dec{\vr{y}}
  \quad \dec{\vr{d}_{j - 1}} 
\EndLoop
\State \reset{\vr{i}} 
\end{algorithmic}
\end{algorithm}

We conclude that:
\begin{samepage}
\begin{claim}
The corresponding program $\mathcal{T}^{(3)}$ defined as
\begin{quote}
\begin{algorithmic}
\State $\mathcal{A}'$ \quad $\mathcal{H}^{(3)}_1$ \quad $\cdots$ \quad $\mathcal{H}^{(3)}_{h + 1}$
\State \haltz{\vr{d}_0, \vr{d}_1, \ldots, \vr{d}_h}
\end{algorithmic}
\end{quote}
is an amplifier by  $n!^{h+1}$  without tested counters satisfying the conditions of Lemma~\ref{lm:ref}. 
\end{claim}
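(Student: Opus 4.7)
The starting point is the composition scheme of Lemma~\ref{lm:main}, tuned to the ratio $n!^{h+1}$: take $\mathcal{A}$ to be a trivial amplifier by~$n$ (cf.\ Example~\ref{ex:ratio}) and form
\[
\mathcal{T} \;=\; \overbrace{\comp{(\comp{(\comp{\mathcal{A}}{\mathcal{F}})}{\mathcal{F}})}{\cdots \mathcal{F}}}^{h+1 \text{ compositions}}.
\]
By Proposition~\ref{pr:comp.corr} and Lemma~\ref{lem:maintech}, $\mathcal{T}$ is already an amplifier by $n!^{h+1}$ without tested counters, computable in time $O(n+h)$. The only issue is the counter count: a naive expansion introduces a fresh copy of the eight untested counters of~$\mathcal{F}$, plus the two complement counters $\hat{\vr{i}}_j,\hat{\vr{i}}'_j$ introduced by the composition operator, at every level~$j$, giving $\Theta(h)$ counters overall rather than the target~$h+13$. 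The whole plan is to trim this count by reusing counters across levels.

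The main structural observation is that after expansion the program has the shape $\mathcal{A}'\,\mathcal{H}_1\,\cdots\,\mathcal{H}_{h+1}$, where each block~$\mathcal{H}_j$ only touches counters indexed by~$j$ together with the ratio triple $\vr{b}_{j-1},\vr{c}_{j-1},\vr{d}_{j-1}$ inherited from the previous level. Hence the blocks are essentially independent, and if one can guarantee that every non-ratio counter of~$\mathcal{H}_j$ ends at zero at the block boundary, then those counters can be identified across all levels.

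To force zero at the boundaries I would carry out two instrumentations. First, a rereading of the proof of Lemma~\ref{lem:maintech} shows that in a complete $k$-run of~$\mathcal{F}$ the inner counters $\vr{x},\vr{d}',\vr{c}',\vr{b}',\vr{i}'$ already end at zero and $\vr{i}$ ends at~$k$; only $\vr{i}_j,\hat{\vr{i}}'_j$ and $\vr{y}_j$ remain possibly nonzero at the exit of~$\mathcal{H}_j$. For the first two I would append a short \textbf{reset} macro that drains them while decrementing $\vr{d}_{j-1}$ and charging one $\vr{c}_{j-1}$ decrement, so that the composition invariant $\vr{d}_{j-1}\geq \vr{c}_{j-1}\cdot k$ forces the drain to be exact. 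For $\vr{y}_j$ I would pair every increment occurring in $\mathcal{H}_j$ with an increment of $\vr{d}_{j-1}$ and every decrement with a decrement of $\vr{d}_{j-1}$; the strengthened invariant $\vr{d}_{j-1}\geq \vr{c}_{j-1}\cdot R+\vr{y}_j$ then keeps Proposition~\ref{pr:comp.corr} intact while forcing $\vr{y}_j=0$ at termination without putting it in the \halt\ list. I expect this $\vr{y}_j$-folding step to be the main obstacle: one has to verify carefully that the strengthened invariant propagates through every decrement of $\vr{d}_{j-1}$ inside the simulated zero/max tests and the subtraction macros, without ever introducing slack that would leave $\vr{d}_{j-1}>0$ at the end.

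Once every non-ratio counter of $\mathcal{H}_j$ is certifiably zero at the block boundary, the last step is a sequence of collapses: a single set $\vr{i},\hat{\vr{i}},\vr{i}',\vr{b}',\vr{c}',\vr{d}',\vr{x},\vr{y}$ of eight non-ratio counters is reused across all~$j$; since $\vr{b}_j$ is zero right after its setup in the next block, all $\vr{b}_j$ collapse into one~$\vr{b}$; since $\vr{c}_j$ is used only in $\mathcal{H}_j$ and $\mathcal{H}_{j+1}$, two alternating counters $\vr{c}_0,\vr{c}_1$ suffice; the $\vr{d}_j$'s must stay separate because they carry the ratio through the chain, so $\vr{d}_0,\ldots,\vr{d}_{h+1}$ remain, with $\vr{d}_0,\ldots,\vr{d}_h$ placed in the final~\halt. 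A small rewriting of the \textbf{setup} and \textbf{reset} macros to avoid $\hat{\vr{i}}'$ saves one more counter, giving $(h{+}2)+2+1+8=h+13$ counters in total, $h+1$ of which appear in~\halt; and $8+1=9$ of the remaining $12$ (the eight non-ratio counters together with~$\vr{b}$) are forced to zero at termination by the combination of the \textbf{reset} macros and the $\vr{d}$-coupled invariants.
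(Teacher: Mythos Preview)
Your proposal follows essentially the same route as the paper's proof: the same sequence of optimisations (the \textbf{reset} macro for $\vr{i}_j,\hat{\vr{i}}'_j$, the $\vr{y}_j$-folding via the strengthened invariant $\vr{d}_{j-1}\geq \vr{c}_{j-1}\cdot R+\vr{y}_j$, collapsing the non-ratio counters across levels, collapsing the $\vr{b}_j$'s into one and the $\vr{c}_j$'s into two alternating copies, and eliminating $\hat{\vr{i}}'$ by rewriting the subtraction macros). The only slip is in your last sentence: the collapsed counter $\vr{b}$ is \emph{not} among the nine forced-zero counters---it carries the output value $n!^{h+1}$---so the ninth forced-zero counter is instead the non-output alternating counter $\vr{c}_{h\bmod 2}$.
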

\end{samepage}

The proof of Lemma~\ref{lm:ref} is thus completed.
\end{proof}

\end{document}